\DeclareMathAlphabet{\mathpzc}{OT1}{pzc}{m}{it} 
\newcommand{\atmset}{\ensuremath{\mathit{AP}}\xspace}
\newcommand{\AP}{{\atmset}}
\protected\def\MSO{\ifmmode \mbox{\sc MSO} \else {\sc MSO}\xspace\fi}
\protected\def\FO{\ifmmode \mbox{\sc FO} \else {\sc FO}\xspace\fi}
\protected\def\MMSO{\ifmmode {\mbox{\sc M}\MSO} \else {{\sc M}\MSO}\xspace\fi}
\protected\def\MFO{\ifmmode \mbox{\sc MFO} \else {\sc MFO}\xspace\fi}
\newcommand{\modelM}{\mathcal M}
\newcommand{\eventmodel}{\mathcal E}
\definecolor{TODO_COLOR}{rgb}{1,0.5,0.5}
\newcommand\AEXPTIMEpol{A{$_{pol}$}EXPTIME}
\newcommand{\set}[1]{{\{#1\}}}
\renewcommand{\phi}{\varphi}
\newcommand{\tuple}[1]{\langle #1 \rangle}
\newcommand{\setN}{\mathbb{N}}
\newcommand{\ensN}{\setN}
\newtheorem{theorem}{Theorem}
\newtheorem{definition}{Definition}
\newenvironment{proof}{
 \textsc{Proof.}
   {~\\}
   \normalfont
   \indent
 }{$\blacksquare$}
\newcommand{\algofunction}{\textbf{function }}
\newcommand{\algoprocedure}{\textbf{procedure }}
\newcommand{\algoendfunction}{\textbf{endFunction }}
\newcommand{\algofor}{\textbf{for }}
\newcommand{\algodo}{\textbf{do }}
\definecolor{algocommentbackgroundcolor}{rgb}{1,1,0.5}
\newcommand{\algowhile}{\textbf{while }}
\newcommand{\algoif}{\textbf{if }}
\newcommand{\algothen}{\textbf{then }}
\newcommand{\algoendif}{\textbf{endIf }}
\newcommand{\algomatch}{\textbf{match }}
\newcommand{\algocase}{\textbf{case }}
\newcommand{\algochoose}{\textbf{choose }}
\newcommand{\algoaccept}{\textbf{accept }}
\newlength{\algoindentlongueur}
\newcommand{\algoindent}{\hspace*{\algoindentlongueur}}
\newlength{\algoindentavantvrulelongueur}
\newcommand{\algoindentavantvrule}{\hspace*{\algoindentavantvrulelongueur}}
\newlength{\dummy}
\newsavebox{\frameminipageboiteavecunnomsuperlongdesortequonnepuissepaslereutiliser}
\newenvironment{frameminipage}[2][c]{%
\begin{lrbox}{\frameminipageboiteavecunnomsuperlongdesortequonnepuissepaslereutiliser}%
\begin{minipage}[#1]{#2}%
} {%
\end{minipage}%
\end{lrbox}%
\framebox{\usebox{\frameminipageboiteavecunnomsuperlongdesortequonnepuissepaslereutiliser}}%
}
\newenvironment{algo} {
  \begin{frameminipage}{\linewidth}
} {
  \end{frameminipage}
}
\newenvironment{algobloc}{\setlength{\dummy}{\linewidth}\addtolength{\dummy}{- \algoindentlongueur}\addtolength{\dummy}{- \algoindentavantvrulelongueur}\algoindentavantvrule\vrule\algoindent\begin{minipage}{\dummy}}{\end{minipage}}
\newenvironment{algoblocprocedure}[1]
{\algoprocedure #1 \\  \begin{algobloc}}
	{\end{algobloc}}
\newenvironment{algoblocfor}[1]
{\algofor #1 \algodo \\  \begin{algobloc}}
{\end{algobloc}}
\tikzstyle{zoneprogramcounter} = [fill=gray!40, draw=none]
\tikzstyle{zonedatacounter} = [fill=gray!40, draw=none, decorate, decoration={snake, amplitude=1pt, segment length=4pt}]
\tikzstyle{portion} = [densely dotted, shade, top color = white, bottom color = gray!40]
\tikzstyle{portiondata} = [densely dotted, shade, top color = white, bottom color = gray!40, decorate, decoration={snake, amplitude=1pt, segment length=4pt}]
\tikzset{
	double arrow/.style args={#1 colored by #2 and #3}{
		-stealth,line width=#1,#2, 
		postaction={draw,-triangle 90 cap,#3,line width=(#1)-4pt,
			shorten <=4pt,shorten >=4}, 
	}
}
\tikzstyle{tapearrow} = [double arrow=10pt colored by black!50!white and black!20!white, -triangle 90 cap, fill = white]%
\tikzstyle{execarrow} = [line width=1pt,->]%
\tikzstyle{world}=[inner sep=0.5mm]
\tikzstyle{event}=[fill=gray!30, inner sep=0.5mm]
\tikzstyle{realworldarrowfromleft} = [initial left, initial text={}]
\tikzstyle{realworldarrowfromright} = [initial right, initial text={}]
	\newcommand{\stateaccept}{{accept}}
\protected\def\DTIME{\ifmmode \mbox{\sc Dtime} \else {\sc Dtime}\xspace\fi}
\protected\def\NTIME{\ifmmode \mbox{\sc Ntime} \else {\sc Ntime}\xspace\fi}
\protected\def\DSPACE{\ifmmode \mbox{\sc Dspace} \else {\sc Dspace}\xspace\fi}
\protected\def\NSPACE{\ifmmode \mbox{\sc Nspace} \else {\sc Nspace}\xspace\fi}
\protected\def\ASPACE{\ifmmode \mbox{\sc Aspace} \else {\sc Aspace}\xspace\fi}
\protected\def\LOGSPACE{\ifmmode \mbox{\sc LOGSPACE} \else {\sc LOGSPACE}\xspace\fi}
\protected\def\NLOGSPACE{\ifmmode \mbox{\sc NLOGSPACE} \else {\sc NLOGSPACE}\xspace\fi}
\protected\def\NP{\ifmmode \mbox{\sc NP} \else {\sc NP}\xspace\fi}
\protected\def\AP{\ifmmode \mbox{\sc AP} \else {\sc AP}\xspace\fi}
\protected\def\coNP{\ifmmode \mbox{\sc coNP} \else {\sc coNP}\xspace\fi}
\protected\def\NPSPACE{\ifmmode \mbox{\sc NPspace} \else {\sc NPspace}\xspace\fi}
\protected\def\APSPACE{\ifmmode \mbox{\sc APspace} \else {\sc APspace}\xspace\fi}
\protected\def\PSPACE{\ifmmode \mbox{\sc Pspace} \else {\sc Pspace}\xspace\fi}
\protected\def\EXPSPACE{\ifmmode \mbox{\sc Expspace} \else {\sc Expspace}\xspace\fi}
\protected\def\TWOEXPSPACE{\ifmmode \mbox{\sc 2Expspace} \else {\sc 2Expspace}\xspace\fi}
\protected\def\PTIME{\ifmmode \mbox{\sc P} \else {\sc P}\xspace\fi}
\protected\def\NPTIME{\ifmmode \mbox{\sc NP} \else {\sc NP}\xspace\fi}
\protected\def\EXPTIME{\ifmmode \mbox{\sc Exptime} \else {\sc Exptime}\xspace\fi}
\protected\def\AEXPTIME{\ifmmode \mbox{\sc Aexptime} \else {\sc Aexptime}\xspace\fi}
\protected\def\NEXPTIME{\ifmmode \mbox{\sc NExptime} \else {\sc NExptime}\xspace\fi}
\protected\def\2EXPTIME{\ifmmode \mbox{\sc 2-Exptime} \else {\sc
		2-Exptime}\xspace\fi}
\DeclareRobustCommand{\kEXPTIME}[1][k]{\ifmmode \mbox{\sc $#1$-Exptime}
	\else {\sc $#1$-Exptime}\xspace\fi}
\DeclareRobustCommand{\kNEXPTIME}[1][k]{\ifmmode \mbox{\sc $#1$-NExptime}
	\else {\sc $#1$-NExptime}\xspace\fi}
\DeclareRobustCommand{\kAEXPTIME}[1][k]{\ifmmode \mbox{\sc $#1$-AExptime}
	\else {\sc $#1$-AExptime}\xspace\fi}
\DeclareRobustCommand{\kEXPSPACE}[1][k]{\ifmmode \mbox{\sc $#1$-Expspace}
	\else {\sc $#1$-Expspace}\xspace\fi}
\DeclareRobustCommand{\kAEXPSPACE}[1][k]{\ifmmode \mbox{\sc $#1$-AExpspace}
	\else {\sc $#1$-AExpspace}\xspace\fi}
\protected\def\ELEMENTARY{\ifmmode \mbox{\sc Elementary} \else {\sc Elementary}\xspace\fi}
\protected\def\AEXPpol{\ifmmode \mbox{{\sc A}_{\text{pol}}\EXPTIME} \else
	{\sc A}$_{\text{pol}}$\EXPTIME\fi}
\protected\def\APTIME{\ifmmode \mbox{\sc Aptime} \else {\sc Aptime}\xspace\fi}
\protected\def\AEXPSPACE{\ifmmode \mbox{\sc Aexpspace} \else {\sc Aexpspace}\xspace\fi}
\newcommand{\gloups}[1]{\bigcirc}
\tikzstyle{cell} = [draw,minimum height=5mm,minimum width=5mm]
\tikzset{
	cellcolor/.cd,
	0/.style={fill=gray!20!white},
	1/.style={fill=yellow!20!white}
}
\tikzstyle{cellalive} = [fill=yellow!20!white]
\newcommand{\word}{w}
\newcommand\listeventsempty\epsilon
\NewDocumentCommand{\update}{O{}}{\modelM \otimes \eventmodel^{#1}}
\NewDocumentCommand{\DELstruct}{O{*}}{\modelM \eventmodel^{#1}}
\newcommand{\tile}[6]{
	\draw[fill=#3] (#1,#2) -- (#1,#2+1) -- (#1+0.5, #2+0.5) -- (#1, #2);
	%
	\draw[fill=#4] (#1,#2+1)-- (#1+1,#2+1) -- (#1 +0.5, #2+0.5) -- (#1, #2+1);
	%
	\draw[fill=#5] (#1+1,#2)-- (#1+1,#2+1) -- (#1 +0.5, #2+0.5) -- (#1 +1, #2);
	%
	\draw[fill=#6] (#1,#2)-- (#1 +1,#2) -- (#1 +0.5, #2+0.5) -- (#1, #2);
	\draw[fill=black] (#1+0.5, #2+0.5) -- (#1+0.6, #2+0.4) -- (#1+0.4, #2+0.4) -- cycle;
}
\newcommand{\tiletext}[6]{
	\tile {#1} {#2} {none} {none} {none} {none}
	\node at (#1+0.25, #2+0.5) {#3}; 
	\node at (#1+0.5, #2+0.8) {#4}; 
	\node at (#1+0.75, #2+0.5) {#5}; 
	\node at (#1+0.5, #2+0.1) {#6}; 
}
\newcommand{\tilered}{red!80}
\newcommand{\tileyellow}{yellow!50}
\newcommand{\tilegreen}{green!70}
\newcommand{\tilewhite}{white}
\definecolor{C}{rgb}{1,1,1}
\definecolor{C'}{rgb}{1,1,0.9}
\definecolor{C'}{rgb}{1,1,0.9}
\definecolor{C }{rgb}{0.8,0.8,0.8}
\definecolor{Cs}{rgb}{0.95,1,0.5}
\definecolor{Ct}{rgb}{0.95,1,0.4}
\definecolor{Cu}{rgb}{0.95,1,0.5}
\definecolor{Cv}{rgb}{0.95,1,0.4}
\definecolor{Ca}{rgb}{0.95,1,0.5}
\definecolor{Ca'}{rgb}{0.95,1,0.5}
\definecolor{Cb}{rgb}{1,0.9,0.5}
\definecolor{Cb'}{rgb}{1,0.9,0.5}
\definecolor{Cq'-}{rgb}{0.5,0.5,1}
\definecolor{Cq'}{rgb}{0.5,0.5,1}
\definecolor{C-q_0}{rgb}{0.5,0.5,1}
\definecolor{C-q_1}{rgb}{0.5,0.4,1}
\definecolor{Cq_0}{rgb}{0.5,0.5,1}
\definecolor{Cq_1}{rgb}{0.5,0.4,1}
\definecolor{Cq_1'}{rgb}{0.5,0.4,1}
\definecolor{Cq_2'}{rgb}{0.5,0.4,1}
\definecolor{C-q'}{rgb}{0.5,0.8,0.5}
\definecolor{Cq}{rgb}{0.5,0.8,0.5}
\definecolor{Cq_f-}{rgb}{0.4,0.9,0.5}
\definecolor{C2}{rgb}{1,0.6,0.5}
\definecolor{C3}{rgb}{1,0.6,0.5}
\definecolor{C|w|}{rgb}{1,0.6,0.5}
\definecolor{C|w|{^+}1}{rgb}{1,0.6,0.5}
\definecolor{Cw_{|w|}'}{rgb}{0.95,1,0.5}
\definecolor{Cq_0',w_1'}{rgb}{1,0.6,0.5}
\definecolor{Cw_2'}{rgb}{0.95,1,0.5}
\definecolor{Cq_0,}{rgb}{1,0.6,0.4}
\definecolor{Cq_2,}{rgb}{0.8,0.9,0.4}
\definecolor{Cq_2,a}{rgb}{0.8,0.9,0.4}
\definecolor{Cq_f,a}{rgb}{0.7,0.9,0.4}
\definecolor{Cq_0,a}{rgb}{1,0.6,0.5}
\definecolor{Cq_0,b}{rgb}{1,0.5,0.5}
\definecolor{Cq_0',b'}{rgb}{1,0.5,0.5}
\definecolor{Cq_1,a}{rgb}{1,0.5,0.4}
\definecolor{Cq_1,b}{rgb}{1,0.4,0.4}
\definecolor{Cq_1',b'}{rgb}{1,0.4,0.4}
\definecolor{Cq_2',b'}{rgb}{1,0.4,0.4}
\definecolor{Cq,a}{rgb}{1,0.5,0.5}
\definecolor{Cq,b}{rgb}{1,0.5,0.5}
\definecolor{Cq',a}{rgb}{1,0.5,0.5}
\definecolor{Cq',a'}{rgb}{1,0.5,0.5}
\definecolor{Cq',b}{rgb}{1,0.5,0.5}
\definecolor{Cq',b'}{rgb}{1,0.5,0.5}
\definecolor{Cq_0,1}{rgb}{0.5,0.5,0.5}
\newcommand{\tiletextcolor}[6]{
	\def\templeft{C#3}	
	\def\tempup{C#4}
	\def\tempright{C#5}
	\def\tempdown{C#6}
	\tile{#1}{#2}{\templeft}{\tempup}{\tempright}{\tempdown}
	\tiletext {#1} {#2} {$#3$} {$#4$} {$#5$} {$#6$}}
\newcommand{\tilecopytoprime}[3]{
	\tiletextcolor {#1} {#2} {} {#3} {} {#3'}}
\newcommand{\tilecopyfromprime}[3]{
	\tiletextcolor {#1} {#2} {} {#3'} {} {#3}}
\newcommand{\tilegotoright}[6]{
	\tiletextcolor {#1} {#2} {} {#3,#4} {#6} {#5}}
\newcommand{\tilerecfromright}[4]{
	\tiletextcolor {#1} {#2} {#3} {#4} {} {#3,#4}}
\begin{document}

\title{The Complexity of Tiling Problems}
\author{François Schwarzentruber \\ Univ Rennes, CNRS, IRISA, France}

\maketitle

\newcommand{\settilingtypes}{T}
\newcommand{\atile}{t}
\newcommand{\verticalconstraint}[2]{\tikz[scale=0.3]{
\node at (0.5, 0.5) {\ensuremath{#1}};
\node at (0.5, -0.5) {\ensuremath{#2}};
\draw (0, 0) rectangle (1, 1);
\draw (0, 0) rectangle (1, -1);
}}

\newcommand{\STA}[3]{STA(#1, #2, #3)}
\newcommand{\tilingproblem}[1]{TILING(#1)}

\begin{abstract}
In this document, we collected the most important complexity results of tilings. We also propose a definition of a so-called deterministic set of tile types, in order to capture deterministic classes without the notion of games. We also pinpoint tiling problems complete for respectively LOGSPACE and NLOGSPACE.
\end{abstract}

\section{Introduction}

As advocated by van der Boas \cite{van1997convenience}, tilings are convenient to prove lower complexity bounds. 
In this document, we show that tilings of finite rectangles with Wang tiles \cite{DBLP:journals/cacm/Wang60} enable to capture many standard complexity classes:

\begin{quote}
\FO (the class of decision problems defined by a first-order formula, see~\cite{immerman2012descriptive}),
\LOGSPACE, \NLOGSPACE, \PTIME, \NP, \PSPACE, \EXPTIME, \NEXPTIME, $k$-\EXPTIME and $k$-\EXPSPACE, for $k \geq 1$.
\end{quote}

  This document brings together many results of the literature. We recall some results from \cite{van1997convenience}. The setting is close to Tetravex \cite{DBLP:journals/ipl/TakenagaW06}, but the difference is that we allow a tile to be used several times. That is why we will the terminology \emph{tile types}. We also recall the results by Chlebus \cite{DBLP:journals/jcss/Chlebus86} on tiling games, but we simplify the framework since we suppose that players alternate at each row (and not at each time they put a tile).

The first contribution consists in capturing deterministic time classes with an existence of a tiling, and without any game notions. We identify a syntactic class of set of tiles, called  \emph{deterministic} set of tiles. For this, we have slightly adapted the definition of the encoding of executions of Turing machines given in \cite{van1997convenience}.

The second contribution is the connection between one-dimensional tilings and the classes \LOGSPACE and \NLOGSPACE. In particular, we rely on the fact that the reachability problem in directed graphs is \NLOGSPACE-complete \cite{Papadimitriou}, and the reachability problem in undirected graphs is in \LOGSPACE \cite{DBLP:journals/jacm/Reingold08}. 
There are small differences compared to the literature. Note that Grädel (see \cite{DBLP:journals/siamcomp/Gradel90}, p. 800 before Th. 7.1) also introduced one-dimensional tilings, more precisely domino games. According to Grädel \cite{DBLP:journals/siamcomp/Gradel90} (Remark p. 802), they were also introduced by J. Tor\`an in his PhD thesis. Note that we generalize some result of  Etzion-Petruschka et al. who also considered one-dimensional tilings (see Th. 2.7 in \cite{DBLP:journals/tcs/Etzion-PetruschkaHM94}).

\emph{Outline. } First we recall basic definitions about tilings in Section~\ref{section:definition}. Then, we recall results about existence of tilings and classes above \PTIME in~\ref{section:existenceaboveP}. We then continue with tiling games in Section~\ref{section:games}. We then show how to get rid off games in order to capture deterministic classes in Section~\ref{section:deterministic}. We finish with existence of tilings and classes below \PTIME in Section~\ref{section:existencebelowP}.

\newcommand{\tileseed}{t_0}

\section{Basic Definitions}
\label{section:definition}

A tile type $t$ specifies the four colors of a tile in the left, up, right, down directions.
%
\newcommand{\colors}{\mathcal C}
Formally, let $\colors$ be a countable set of \emph{colors}; a special color being \emph{white}. A tile type $t$ is an element of $\colors^4$, written $\tuple{left(t), up(t), right(t), down(t)}$. 

\newcommand{\tilingfunction}{\tau}
\newcommand{\whitetext}{white}

Let $\settilingtypes$ be a finite set of tile types.
A $\settilingtypes$-\emph{tiling} of the finite $H \times W$ rectangle is a function $\tilingfunction : \set{1, \dots, H} \times \set{1, \dots, W} \rightarrow \settilingtypes$ such that:
\begin{enumerate}
	\item $left(\tilingfunction(i, 1)) = right(\tilingfunction(i, W)) = \whitetext$ \hfill for all $i \in \set{1, \dots, H}$;
	\item $top(\tilingfunction(1, j)) = bottom(\tilingfunction(H, j)) = \whitetext$ \hfill for all $j \in \set{1, \dots, W}$;
	\item $right(\tilingfunction(i, j)) = left(\tilingfunction(i, j{+}1))$
	\hfill  for all $i \in \set{1, \dots, H}$, $j \in \set{1, \dots, W{-}1}$;
	\item $bottom(\tilingfunction(i, j)) = top(\tilingfunction(i{+}1, j))$
	\hfill  for all $i \in \set{1, \dots, H{-}1}$,  $j \in \set{1, \dots, W}$.
\end{enumerate} 
Constraint 1 means that the left of the left-most tiles and the right of the right-most tiles should be white\footnote{As you will see, this constraint is important to identify the beginning of the tape of a Turing machine and to avoid the head to disappear when its head is the left-most cell by triggering a transition moving the head to the left. This constraint will also help to deterministic set of tiles.}. Constraint 2 says that the top of the top-most tiles and the bottom of the bottom-most tiles should be white\footnote{This constraint will also help to deterministic set of tiles.}. Constraint 3 corresponds to the horizontal constraint and constraint 4 to the vertical constraint. 

	Let $w: \ensN \rightarrow \ensN$ and $h: \ensN \rightarrow \ensN$.
We aim to tile the finite $h(n) \times w(n)$ rectangle, as shown in Figure~\ref{figure:tiling}. The top-left tile $\tileseed$ plays the role of a seed and is given. 

\begin{figure}[t]
	\begin{center}
				\begin{tikzpicture}[scale=0.7]
				\draw[fill=lightgray] (0, 3) rectangle (3, 0);
				\tile{0}{2}{\tilewhite}{\tilewhite}{\tilegreen}{\tilered}
				
				\end{tikzpicture}
				\hfill
		\begin{tikzpicture}[scale=0.7]
		\tile 0 0 \tilewhite \tilewhite \tilegreen\tilered
		\tile 1 0 \tilegreen\tilewhite\tilegreen\tileyellow
		\tile 2 0 \tilegreen\tilewhite\tilewhite\tileyellow
		\tile 0 {-1} \tilewhite \tilered\tilered\tilered
		\tile 1 {-1} \tilered \tileyellow\tilered\tilegreen
		\tile 2 {-1} \tilered \tileyellow\tilewhite\tileyellow
		\tile 0 {-2} \tilewhite\tilered\tilegreen\tilewhite
		\tile 1 {-2} \tilegreen\tilegreen\tilered\tilewhite
		\tile 2 {-2} \tilered	\tileyellow\tilewhite\tilewhite
		\end{tikzpicture}
	\end{center}
	\caption{Rectangle to be tiled and a solution.\label{figure:tiling}}
\end{figure}
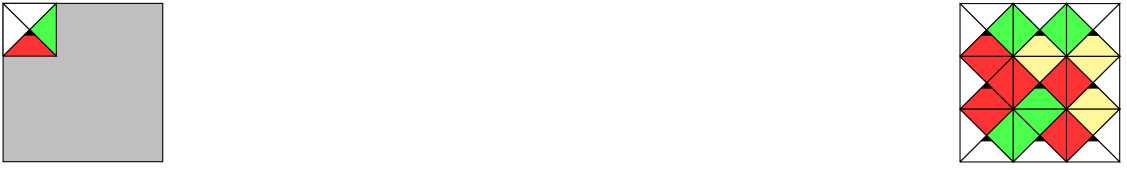

\begin{definition}
	Let $h: \ensN \rightarrow \ensN$ and $w: \ensN \rightarrow \ensN$.
	$\tilingproblem{h, w}$ is the following decision problem:
	\begin{itemize}
\item input: an integer $n$ given in \textbf{unary}, a finite set $\settilingtypes$ of tiling types, a tile $\tileseed$;
\item output: yes, if there is a $\settilingtypes$-tiling $\tilingfunction$ of the $h(n)\times w(n)$ rectangle such that $\tilingfunction(1, 1) = \tileseed$; no, otherwise.

	\end{itemize}
\end{definition}

We write directly the expression $w(n)$ instead of the function $w$. For instance, we write $2^n$ instead of $n \mapsto 2^n$. Same for $h$.
We also consider the variant in which the height is arbitrary.

\begin{definition}
	Let  $w: \ensN \rightarrow \ensN$.
	$\tilingproblem{*, w}$ is the following decision problem:
	\begin{itemize}
		\item input: an integer $n$ given in \textbf{unary}, a finite set $\settilingtypes$ of tiling types, a tile $\tileseed$;
		\item output: yes, if there are an integer $h$ and a $\settilingtypes$-tiling $\tilingfunction$ of the $h \times w(n)$ rectangle such that $\tilingfunction(1, 1) = \tileseed$; no, otherwise.
		
	\end{itemize}
\end{definition}

\section{Existence of Tilings and Classes above \PTIME}
\label{section:existenceaboveP}
\newcommand{\tmstatesfirst}{Q}
\newcommand{\tmstatessecond}{Q'}
\newcommand{\tminitialstate}{q_0'}
\newcommand{\tmfinalstate}{q_{\mathsf{f}}}
\newcommand\turingmachine{M}
\subsection{Encoding  Executions of Turing Machines}

In this section, we explain how to encode an execution of a Turing machine as a tiling. We slightly adapt the \emph{normalization of Turing machines} given in \cite{van1997convenience}, especially for being able to capture deterministic tilings (see Section~\ref{section:deterministic}). As advocated in \cite{van1997convenience}, normalization does not impact on the complexity classes.
Without loss of generality, we suppose that the machine is \emph{normalized}.

\begin{definition}
	\label{definition:normalizedturingmachine}
	A Turing machine $\turingmachine$ is \emph{normalized} if its set of states is partitioned in two disjoint subsets $\tmstatesfirst$ and $\tmstatessecond$ (see Figure~\ref{figure:tmstates})such that:
	\begin{enumerate}
		\item  the initial state $\tminitialstate$ is in $\tmstatessecond$;
		\item transitions going out from $\tmstatesfirst$ go in $\tmstatessecond$ and makes the cursor move right or makes the cursor stay at its current position;
		\item  transitions going out from $\tmstatessecond$ go in $\tmstatesfirst$ and makes the cursor move left  or makes the cursor stay at its current position;
		\item the final (accepting) state $\tmfinalstate$ is in $\tmstatesfirst$;
		\item if the machine reaches the final accepting state $\tmfinalstate$, then the tape has already been erased (all cells contain the blank symbol \textvisiblespace).
		\item the final (accepting) state $\tmfinalstate$ is in $\tmstatesfirst$ and has a copy in $\tmfinalstate'$ in $\tmstatessecond$, there are transitions between them, that do not move the cursor, do not change the tape.
	\end{enumerate} 
\end{definition}

\begin{figure}
	\begin{center}
		\begin{tikzpicture}[yscale=0.8]
			\draw[rounded corners=2mm] (0, -1) rectangle (12, 5);
			\node[draw, minimum width=3cm, minimum height=3cm, rounded corners=2mm] (Q) at (2, 2) {};
			\node[draw, minimum width=3cm, minimum height=3cm, rounded corners=2mm] (Q') at (10, 2) {};
			\node at (10, 3) (q0) {$\tminitialstate$};
			\node at (1, 3)  {$\tmstatesfirst$};
			\node at (2.5, 2) (finalstate) {$\tmfinalstate$};
			\node at (9.5, 2) (finalstate') {$\tmfinalstate'$};
			\node at (11, 3)  {$\tmstatessecond$};
			
			\draw[->] (finalstate) edge[bend left=5] (finalstate');
			\draw[->] (finalstate') edge[bend left=5] (finalstate);
			\draw (Q) edge[->,  bend left=20] node [above, text width=4cm] {some transitions, the head is going to the right or not moving} (Q');
			\draw (Q') edge[->,  bend left=20] node [below, text width=4cm] {some transitions, the head is going to the left or not moving} (Q);			
		\end{tikzpicture}
	\end{center}
	\caption{Set of states of a normalized Turing machine, partitioned in $\tmstatesfirst$ and~$\tmstatessecond$.\label{figure:tmstates}}
\end{figure}
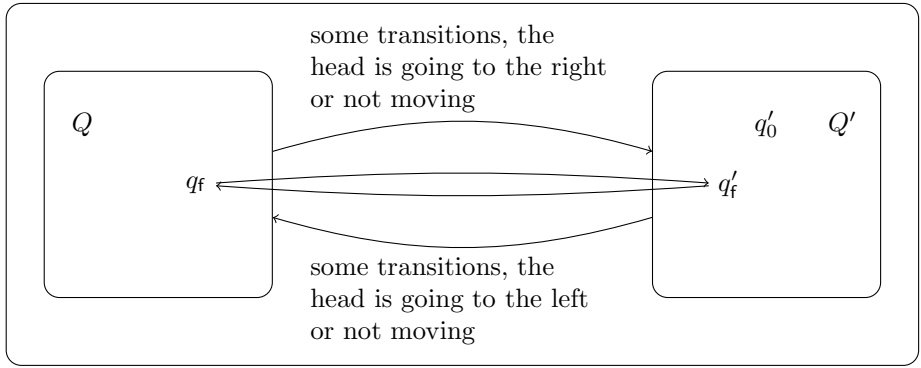

We encode an execution almost as in \cite{van1997convenience}. Let us consider a Turing machine $\turingmachine$ and an input word $\word$. Figure~\ref{figure:tiletypesfortm} shows the set of tiles $\settilingtypes_{\turingmachine, \word}$. These tiles enable to represent any execution of length $H$ of $\turingmachine$ on input $\word$, that uses at most $W$ cells, with a tiling of the $H \times W$-rectangle. 
 The idea is that we always alternate between $\tmstatesfirst$ and $\tmstatessecond$. Being in a state in $\tmstatesfirst$ (resp. in $\tmstatessecond$) is tagged at any tile in a row with the absence (pres. presence) of the symbol ' (prime)\footnote{This difference will help to define deterministic set of tiles, see Section~\ref{section:deterministic}}. 
States in $\tmstatesfirst$ are noted $q$ etc. States in $\tmstatessecond$ are noted $q'$, etc. The color $a'$ is copy of the symbol $a$, it is used to keep track on the full row  whether the current state is in $\tmstatesfirst$ or $\tmstatessecond$. 
Figure~\ref{figure:tmencoding} shows an example of such an encoding of an execution of a machine on the input word~$bba$.
	
The machine being normalized prevents to have two adjacent tiles that would create two cursor positions

\begin{center}
	\scalebox{0.7}{
	\begin{tikzpicture}[scale=1.5]

	\tiletextcolor {3} {0} {} {a'} {q} {q,a}
	\tiletextcolor {4} {0} {q} {b} {} {q,b}
	\end{tikzpicture}}
\end{center}

because we would have allowed to enter a state $q$ both with a transition moving the cursor to the left and with another transition moving the cursor to the right.

When the machines reaches $\tmfinalstate$ it runs forever, and the tiling finishes with a line of white at the bottom. If not, either it runs forever or it gets stuck; there is no line of white at the bottom in the tiling corresponding to the execution.
We could have simply assumed that we loop in $\tmfinalstate$ but the notion of deterministic set of tile types would have been more difficult to define (see Section~\ref{section:deterministic}).

\begin{figure}
	\begin{center}
		\scalebox{0.7}{
		\begin{tikzpicture}[scale=1.5]
		\tiletextcolor {0} {0} {} {a} {} {a'}
		
		\tiletextcolor {1.5} {0} {} {a'} {} {a}

		\tiletextcolor {3} {0} {q'} {a} {} {q',a'}
		
		\tiletextcolor {4.5} {0} {} {a'} {q} {q,a}
		
			\node at (4.5, -0.3){for all symbols $a$, for all $q \in \tmstatesfirst$, $q' \in \tmstatessecond$};
		
		\tiletextcolor {0} {-1.5} {} {q,a} {q'} {b'}
		\node at (3, -1.5+0.5){for al transitions $(q, a, b, \rightarrow, q')$};
		
		\tiletextcolor {0} {-3} {q} {q',a'} {} {b}
		\node at (3, -3+0.5){for all transitions $(q', a, b, \leftarrow, q)$};
		
		\tiletextcolor {0} {-4.5} {} {q,a} {} {q',b'}
		
		\tiletextcolor {1.5} {-4.5} {} {q',a'} {} {q,b}
		
		\node at (4.5, -4.5+0.5){for all transitions $(q, a, b, \cdot, q')$};
		
		\draw[fill=gray!30!white, draw=none] (-1, -4.8) rectangle (1.1, -6.2);
		
		\node at (-0.5, -5.5) {$\tileseed := $};
		\tiletextcolor {0} {-6} {} {} {2} {q_0',w_1'}
		\tiletextcolor {1.2} {-6} {2} {} {3} {w_2'}
		\tiletextcolor {4} {-6} {|w|} {} {} {w_{|w|}'}
		\node at (3, -6) {$\dots$};
		\tiletext {5.5} {-6} {} {} {} {'}

		\tiletext {0} {-7.5} {} {\textvisiblespace'} {} {}

		\tiletext {1.5} {-7.5} {} {$\tmfinalstate'$,\textvisiblespace'} {} {}
		
		\end{tikzpicture}}

	\end{center}
	\caption{Set of tile types for encoding an execution of a given Turing machine~$\turingmachine$ on input word $w$.\label{figure:tiletypesfortm}}
\end{figure}
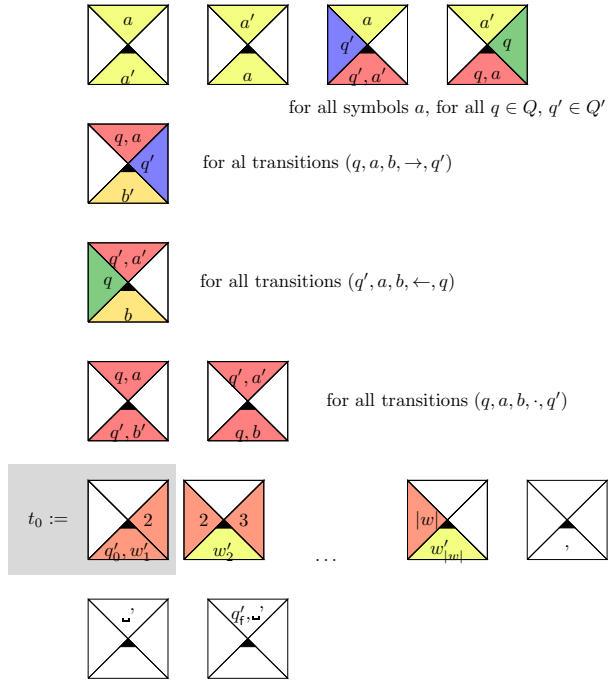

	\begin{figure}
		\begin{center}
			\scalebox{0.7}{
			\begin{tikzpicture}[scale=1.5]
			\tiletextcolor 0 1  {} {} {2}  {q_0',b'}
			\tiletextcolor 1 1 {2} {} {3} {b'}
			\tiletextcolor 2 1 {3} {} {}  {a'}
			\tilecopytoprime 3 1 {}
			\tilecopytoprime 4 1 {}
			\tilecopytoprime 5 1 {}
			\tilecopytoprime 6 1 {}	
			
			\tiletextcolor 0 {0} {} {q_0',b'} {} {q_1,a}
			\tilecopyfromprime 1 {0} {b}
			\tilecopytoprime 2 {0} {a}
			\tilecopyfromprime 3 {0} {}
			\tilecopyfromprime 4 {0} {}
			\tilecopyfromprime 5 {0} {}
			\tilecopyfromprime 6 {0} {}

			\tilegotoright 0 {-1} {q_1} {a} {a'} {q_2'}
			\tilerecfromright 1 {-1} {q_2'} {b'}
			\tilecopytoprime 2 {-1} {a}
			\tilecopytoprime 3 {-1} {}
			\tilecopytoprime 4 {-1} {}
			\tilecopytoprime 5 {-1} {}
			\tilecopytoprime 6 {-1} {}
			
			\tilecopyfromprime 0 {-2} {a}
			\tiletextcolor 1 {-2} {} {q_2',b'} {} {q_2,a}
			\tilecopyfromprime 2 {-2} {a}
			\tilecopyfromprime 3 {-2} {}
			\tilecopyfromprime 4 {-2} {}
			\tilecopyfromprime 5 {-2} {}
			\tilecopyfromprime 6 {-2} {}
%
%
%
%
			
			\end{tikzpicture}}
			
			$\vdots$
		\end{center}
		\caption{Encoding of an execution in a tiling of an execution on the input word $bba$.\label{figure:tmencoding}}
	\end{figure}
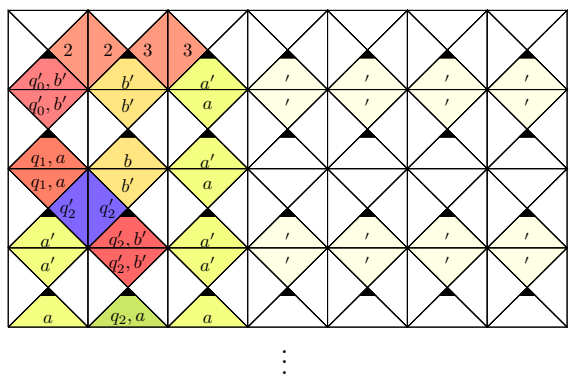

	\subsection{Existence of Tilings in Squares}
	
First we tackle $\tilingproblem{n, n}$. Some readers may be surprised by the relevance of that problem, in which $n$ is given in unary. That assumption is quite natural: any tiling requires $\Omega(n^2)$ memory cells to be stored; memory cells you need to allocate anyway to store that tiling. This is close to the assumption made in bounded planning (called polynomial-length planning problem), for which the bound is also written in unary (see~\cite{DBLP:conf/jelia/Turner02}).

	\begin{theorem}
		\label{theorem:tilingnnNP-complete}
		$\tilingproblem{n, n}$ is \NP-complete.
	\end{theorem}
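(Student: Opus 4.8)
The plan is to establish the two directions of \NP-completeness separately. For membership, the decisive observation is that $n$ is given in \textbf{unary}, so a tiling of the $n \times n$ rectangle is an object of size $O(n^2 \log |\settilingtypes|)$, which is polynomial in the size of the input. A nondeterministic machine can therefore guess the whole function $\tilingfunction : \set{1,\dots,n} \times \set{1,\dots,n} \rightarrow \settilingtypes$ and then verify in polynomial time that $\tilingfunction(1,1) = \tileseed$ and that the four local constraints of a $\settilingtypes$-tiling hold at every position. Hence $\tilingproblem{n,n} \in \NP$.

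For hardness, I would reduce from an arbitrary language $L \in \NP$. Fix a nondeterministic Turing machine $\turingmachine$ deciding $L$ in time $n^c$ on inputs of length $n$; without loss of generality $\turingmachine$ is normalized (Definition~\ref{definition:normalizedturingmachine}). Given an instance $\word$ of length $m$, I set $N := p(m)$ for a polynomial $p$ large enough to bound simultaneously the running time and the number of tape cells of $\turingmachine$ on $\word$ (e.g.\ $N = m^c$ plus a constant to accommodate the final white row). The reduction outputs $N$ in unary, the word-specific tile set $\settilingtypes_{\turingmachine, \word}$ of Figure~\ref{figure:tiletypesfortm}, and the seed $\tileseed$ of that figure. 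Note that the seed together with the chain of colors $2, 3, \dots$ carried by the word-specific initial tiles forces the first row to spell out the initial configuration with $\word$ on the tape. Since $N$ is polynomial in $m$, writing it in unary and constructing $\settilingtypes_{\turingmachine, \word}$ take polynomial time, so this is a valid polynomial-time many-one reduction.

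Correctness rests on the encoding of executions from the previous subsection: a valid tiling of the $N \times N$ square with top-left tile $\tileseed$ corresponds exactly to an execution of $\turingmachine$ on $\word$ whose successive rows are the successive configurations. I would argue both directions. If $\word \in L$, I pick an accepting run; it has length at most $N$ and uses at most $N$ cells, so it fits into $N$ columns, and I pad the height up to $N$ by idling in $\tmfinalstate$ (using the cursor-fixing $\tmfinalstate \leftrightarrow \tmfinalstate'$ transitions of clause~6 of normalization) and closing with the bottom white row. Conversely, any complete tiling of the square must end with the white bottom row (constraint~2), which by the tile set can only occur after $\turingmachine$ has reached $\tmfinalstate$ with an erased tape; reading the rows off as configurations then yields an accepting run, so $\word \in L$.

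The hard part will be matching the dimensions to a square exactly and making the padding arguments precise: one must choose the single parameter $N$ to bound both time and space, check that the normalized accepting state can be idled for the correct number of rows so that the computation fills precisely $N$ rows ending in white, and verify that no spurious tiling of the square exists without corresponding to a genuine accepting run. This last point is where normalization is essential, in particular the property noted just after the tile set that prevents two adjacent tiles from creating two cursor positions.
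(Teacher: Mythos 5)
Your proposal is correct and follows essentially the same route as the paper: membership by guessing the whole tiling (polynomial-size because $n$ is in unary) and checking the local constraints, and hardness by reducing an arbitrary language in \NP{} via the tile set $\settilingtypes_{\turingmachine, \word}$ and seed of Figure~\ref{figure:tiletypesfortm}, with the normalized machine padded by idling in $\tmfinalstate$ so the accepting run fills the square and ends in the white bottom row. Your added remarks on padding and on ruling out spurious tilings simply make explicit what the paper's proof states more tersely.
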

	
	\begin{proof}
A non-deterministic algorithm deciding $\tilingproblem{n, n}$ in polynomial-time consists in guessing a function $\tilingfunction$ and checking that $\tilingfunction$ is indeed a tiling of the $n \times n$-rectangle, and that $\tilingfunction(1, 1) = \tileseed$.

Let $A$ be a problem in $\NP$. There exists a non-deterministic Turing machine $\turingmachine$ that decides $A$ in polynomial-time. W.l.o.g. we suppose that the machine is normalized (see Definition~\ref{definition:normalizedturingmachine}, and that there is a polynomial $f$ such that any execution on an input $w$ of size $|w|$, either stops in strictly less than $f(|w|)$ steps, or reaches $\tmfinalstate$ in strictly less than $f(|w|)$ steps and keeps running forever.

$A$ reduces to $\tilingproblem{n, n}$ in polynomial-time, even in log-space: the reduction is $tr(w) = \tuple{\settilingtypes_{\turingmachine, \word}, n, \tileseed}$ where $n := f(|w|)$, $\settilingtypes_{\turingmachine, \word}$ and $\tileseed$ are shown in Figure~\ref{figure:tiletypesfortm}.

We have $w \in A$ iff $tr(w) \in \tilingproblem{n, n}$. \fbox{$\Leftarrow$} If $w \in A$, then there is an accepting execution. Thus, we can tile the $n \times n$-rectangle using that execution as shown in Figure~\ref{figure:tmencoding}. \fbox{$\Rightarrow$} If there is a tiling of the $n \times n$-rectangle, then the seed enforces the first row to contain the input word. The other tiles enforce the tiling to represent an execution. As the bottoms of the bottom-most tiles are white, it means that the execution reaches $\tmfinalstate$. So the execution is accepting and $w \in A$.
	\end{proof}

Note that we could define the variant of $\tilingproblem{n, n}$ in which no seed $\tileseed$ is given in the input. The problem is to tile the $n\times n$-rectangle without the seed constraint. This problem is called to be the seed-free variant.

	\begin{theorem}
The seed-free variant		$\tilingproblem{n, n}$ is \NP-complete.
	\end{theorem}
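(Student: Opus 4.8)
The plan is to reuse the reduction from Theorem~\ref{theorem:tilingnnNP-complete} almost verbatim; the only new ingredient is an argument that pins the seed to the top-left corner even though the input no longer supplies it. Membership in \NP{} is unchanged: a nondeterministic algorithm guesses a function $\tilingfunction$ and checks in polynomial time that it satisfies the four tiling constraints on the $n \times n$ rectangle, there being no seed clause left to verify.

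For hardness, I would take the same polynomial-time (indeed log-space) reduction $tr$, but output only $\tuple{\settilingtypes_{\turingmachine, \word}, n}$, dropping the tile $\tileseed$. The key observation is that constraints~1 and~2 already force the top-left tile $\tilingfunction(1,1)$ to have a white left color and a white top color, and in the tile set $\settilingtypes_{\turingmachine, \word}$ of Figure~\ref{figure:tiletypesfortm} the seed $\tileseed$ is the unique tile possessing both. Hence any tiling of the seed-free instance must satisfy $\tilingfunction(1,1) = \tileseed$ automatically, so the seed-free instance admits a tiling iff the original seeded instance does. The equivalence $\word \in A \iff tr(\word) \in \tilingproblem{n,n}$ then goes through exactly as in Theorem~\ref{theorem:tilingnnNP-complete}: with the seed pinned, the first-row chain colors force row~$1$ to encode $\word$, the remaining tiles force the encoding of a genuine execution, and the white bottom border certifies that the execution reaches $\tmfinalstate$.

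I expect the main obstacle to be verifying the uniqueness claim, i.e. that no tile other than $\tileseed$ carries white on both its left and its top. Inspecting Figure~\ref{figure:tiletypesfortm}, the only tiles with a white top are those of the seed row (all copy, transition, and final-white tiles carry a genuine tape symbol or a state marker on top, and the blank symbol \textvisiblespace\ is distinct from the border color white); among the seed-row tiles, every tile to the right of $\tileseed$ carries a non-white position/chain color on its left, so $\tileseed$ is indeed the unique white-left-white-top tile. Should an auxiliary first-row tile --- such as a blank-padding tile filling cells beyond $\word$ --- happen to be drawn with a white left, I would simply assign it a fresh non-white left color; this is harmless, since in the intended tiling the leftmost tile of row~$1$ is always $\tileseed$ and these padding tiles never touch the left border. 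With this adjustment the reduction is correct, establishing \NP-hardness.
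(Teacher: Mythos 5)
Your \NP{} membership argument is fine, and your instinct to pin $\tilingfunction(1,1)$ by making $\tileseed$ the unique tile with white left and white top is in the right spirit, but the concrete construction has a genuine gap. In Figure~\ref{figure:tiletypesfortm} the blank-padding tile of the first row has white left \emph{and} white right, and the last word tile (bottom $w_{|\word|}'$) has white right. If, as you propose, you only replace the padding tile's left color by a fresh non-white color $c$, then no tile in $\settilingtypes_{\turingmachine,\word}$ has right color $c$, so a padding tile can never be placed at all: for $\word \in A$ with $|\word| < n$ the intended tiling (seed, word tiles, padding) no longer exists, and the first row can then only be filled by concatenating full copies of the word end to end, which in general does not fit the width $n = f(|\word|)$. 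So the ``$\Leftarrow$'' direction of your reduction breaks. To let padding tiles chain you must also set the last word tile's right color to $c$ and the padding tile's right color to $c$; but since column $n$ must have a white right color, you are then forced to introduce some first-row tile with left $c$ and white right.

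That repair exposes the deeper problem: uniqueness of the white-left/white-top tile only controls position $(1,1)$, not the rest of row~1. Immediately to the right of an \emph{interior} occurrence of a white-right, white-top tile, the next tile has white left and white top, hence must again be $\tileseed$ --- so row~1 can contain \emph{several} copies of the initial configuration, hence several heads, and your sentence ``the first-row chain colors force row~1 to encode $\word$'' is exactly the step that fails. Making the argument sound would then require analyzing tilings with multiple synchronously evolving heads (or further normalizing the machine so its behaviour depends only on the tape content up to the first blank), none of which you provide. The paper's one-line proof is designed to avoid precisely this: ``add numbers in colors in order to count,'' i.e.\ every first-row tile, padding included, carries its column index in its left/right colors (there are only $n$ such tiles, and $n$ is given in unary). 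Then column $j$ of row~1 must carry index $j$, the unique first-row tile with white right is the one with left color $n$, a second seed can never fit, and the rest of your argument (forced encoding of $\word$, execution encoding, white bottom row certifying acceptance) goes through unchanged.
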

	
	\begin{proof}
		For the \NP-hardness of the seed-free variant, it suffices to add "numbers" in colors in order to count.
	\end{proof}

	\newcommand{\expo}[2]{exp_{#1}(#2)}

In Theorem~\ref{theorem:tilingnnNP-complete}, the size of the square is $n$. If the size becomes exponential in $n$, double-exponential in $n$, etc., we capture the class $\NEXPTIME$, $2\NEXPTIME$, etc. That is why we define $\expo k n$ inductively on $k$:
\begin{itemize}
	\item $\expo 0 n := n$;
	\item $\expo k n := 2^{\expo {k-1} n}$ for all $k \geq 1$.
\end{itemize} 

In other words, $\expo k n$ is
$$ \begin{matrix}
\underbrace{2_{}^{2^{{}^{.\,^{.\,^{.\,^n}}}}}}.\\
\qquad\quad\ \ \ k\mbox{ occurrences of }2
\end{matrix}$$
%
%
	\begin{theorem}
		\label{theorem:knexptime}
		$\tilingproblem{\expo k n, \expo k n}$ is k\NEXPTIME-complete.
	\end{theorem}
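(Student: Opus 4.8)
The plan is to mirror the proof of Theorem~\ref{theorem:tilingnnNP-complete} exactly, replacing the side length $n$ by $\expo k n$ throughout. For membership in $k$-\NEXPTIME\ I would give the obvious nondeterministic guess-and-check algorithm: guess a function $\tilingfunction$ on the $\expo k n \times \expo k n$ grid and verify the four tiling constraints together with $\tilingfunction(1,1) = \tileseed$. Since $n$ is given in unary, the input length is $\Theta(n)$, so the grid has $(\expo k n)^2$ cells, which is still $k$-fold exponential in the input length; guessing it takes $k$-exponential nondeterministic time and the verification is linear in the grid size, so the whole procedure runs in $k$-\NEXPTIME. The only bookkeeping point here is that $(\expo k n)^2 = 2^{2\expo{k-1}n} \leq \expo k {O(n)}$, so squaring the side length does not leave the class.

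For hardness, let $A \in k$-\NEXPTIME\ and fix a nondeterministic Turing machine $\turingmachine$ deciding $A$ in time $\expo k {p(|\word|)}$ for some polynomial $p$; without loss of generality $\turingmachine$ is normalized (see Definition~\ref{definition:normalizedturingmachine}). Any computation then visits at most $\expo k {p(|\word|)}$ cells, since a time bound also bounds space. The reduction is $tr(\word) = \tuple{\settilingtypes_{\turingmachine, \word}, n, \tileseed}$ with $n := p(|\word|)$, using the tile set of Figure~\ref{figure:tiletypesfortm}, so that the target rectangle has height and width exactly $\expo k n = \expo k {p(|\word|)}$. I would argue correctness verbatim as before: \fbox{$\Leftarrow$} an accepting run of $\turingmachine$ on $\word$ reaches $\tmfinalstate$ within the bound and, by normalization, loops while keeping the tape erased, so it can be laid out as in Figure~\ref{figure:tmencoding} to tile the whole square, finishing with a white bottom row; \fbox{$\Rightarrow$} from any tiling, the seed forces the first row to hold $\word$, the remaining tiles force a faithful encoding of a run, and the white bottom border forces that run to reach $\tmfinalstate$, hence to accept.

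The crucial subtlety --- and the point that makes the unary encoding of $n$ essential --- is that the reduction must be efficient even though it describes a $k$-exponentially large object. Here $tr$ only writes down $n = p(|\word|)$ in unary (polynomial length) together with the polynomial-size tile set $\settilingtypes_{\turingmachine,\word}$; the enormous rectangle is never materialised by the reduction, only by the nondeterministic tiling algorithm. Thus $tr$ is computable in logarithmic space, and combined with membership this yields completeness. The main obstacle I anticipate is precisely this matching of parameters: I must check that the time bound $\expo k {p(|\word|)}$ of $\turingmachine$ coincides with the side length $\expo k n$ for $n = p(|\word|)$, that the same bound governs the number of visited cells so that width $\expo k n$ suffices, and that $\tmfinalstate$ is reached strictly before the bottom row so the white border can appear --- all of which are guaranteed by the normalization conditions (items 5 and 6 of Definition~\ref{definition:normalizedturingmachine}) together with the slack built into the polynomial $p$.
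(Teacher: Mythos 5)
Your proof is correct and follows exactly the approach the paper intends: Theorem~\ref{theorem:knexptime} is stated there without an explicit proof, as the immediate scaling of Theorem~\ref{theorem:tilingnnNP-complete} from side length $n$ to $\expo k n$, which is precisely what you carry out (guess-and-check membership, plus the same reduction from a normalized machine with $n := p(|\word|)$ and the tile set of Figure~\ref{figure:tiletypesfortm}). Your extra bookkeeping --- that $(\expo k n)^2$ stays $k$-fold exponential in the unary input length, and that the reduction never materialises the huge rectangle and hence is logspace --- is sound and merely makes explicit what the paper leaves implicit.
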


	\subsection{Existence of Tilings in Rectangles of Arbitrary Height}

\begin{theorem}
	\label{theorem:PSPACE}
	$\tilingproblem{ 2^n, n}$ and $\tilingproblem{*, n}$ are \PSPACE-complete.
\end{theorem}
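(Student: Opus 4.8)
The plan is to prove both membership in \PSPACE and \PSPACE-hardness, reusing the Turing-machine encoding of Figure~\ref{figure:tiletypesfortm}. For membership I would read a tiling of a width-$n$ rectangle row by row: each row is a word of length $n$ over $\settilingtypes$ satisfying the horizontal constraint~3 and the white-border constraint~1, and two consecutive rows must satisfy the vertical constraint~4. A tiling thus corresponds to a path, in the graph whose vertices are the (at most $\card{\settilingtypes}^n$) legal rows, from a row whose top is entirely white and whose leftmost tile is $\tileseed$ to a row whose bottom is entirely white. A single row is storable in polynomial space and compatibility of two rows is checkable in polynomial time, so for $\tilingproblem{*, n}$ one guesses the rows one at a time, keeping only the current row in memory; this is an \NPSPACE{} procedure, hence \PSPACE{} by Savitch's theorem. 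For $\tilingproblem{2^n, n}$ one does the same while maintaining an $n$-bit counter of the current row index and accepting only when the white-bottom row is produced exactly at index $2^n$.

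For hardness I would mimic the reduction of Theorem~\ref{theorem:tilingnnNP-complete}. Let $A \in \PSPACE$ and let $\turingmachine$ be a (deterministic, normalized) machine deciding $A$ in space $p(\card{\word})$ for some polynomial $p$; produce $\settilingtypes_{\turingmachine, \word}$ and $\tileseed$ exactly as in Figure~\ref{figure:tiletypesfortm}. Since the width equals the number of tape cells, I would set $n$ to be a constant multiple of $p(\card{\word})$, large enough that $2^n$ strictly exceeds the number $\card{\tmstatesfirst \cup \tmstatessecond}\cdot n \cdot \card{\Gamma}^{p(\card{\word})}$ of space-$p(\card{\word})$ configurations (the machine then actively uses only $p(\card{\word})$ of the $n$ columns, the rest staying blank). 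For $\tilingproblem{*, n}$, where the height is unconstrained, correctness is immediate: if $\word \in A$ the accepting run reaches $\tmfinalstate$ within $n$ cells, and normalization items~5--6 let the tiling be capped by a white-bottom row; conversely, since the only white-bottom tiles require an erased, $\tmfinalstate'$-labelled top, a white-bottom row forces the run to have reached $\tmfinalstate$, so $\word \in A$. Thus $\word \in A$ iff $tr(\word) \in \tilingproblem{*, n}$.

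The delicate point is the fixed-height problem $\tilingproblem{2^n, n}$, where the tiling must have \emph{exactly} $2^n$ rows. Here I would use that once $\turingmachine$ reaches $\tmfinalstate$ it loops between $\tmfinalstate$ and $\tmfinalstate'$ (item~6) without touching the tape, so the execution can be padded to any sufficiently large length, and by the choice of $n$ the state $\tmfinalstate$ is reached in fewer than $2^n$ rows whenever $\word \in A$. The main obstacle is to check that the padding can terminate at \emph{precisely} row $2^n$: the rows alternate between unprimed ($\tmstatesfirst$) and primed ($\tmstatessecond$) configurations, and the only white-bottom tiles carry primed tops, so the cap can sit only under a primed row. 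I would verify the parity bookkeeping — row~$1$, carrying $\tminitialstate \in \tmstatessecond$, is primed, and $2^n-1$ is odd — so that the $\tmfinalstate'$-loop indeed supplies a primed row at index $2^n-1$ and the white cap lands at index $2^n$. Combined with the membership argument, which already supplies the upper bound, this yields \PSPACE-completeness of both $\tilingproblem{2^n, n}$ and $\tilingproblem{*, n}$.
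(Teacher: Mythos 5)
Your proposal is correct and follows essentially the same route as the paper: the same row-by-row nondeterministic polynomial-space algorithm (with an $n$-bit row counter for $\tilingproblem{2^n,n}$), and the same hardness reduction via the tile set $\settilingtypes_{\turingmachine,\word}$ of Figure~\ref{figure:tiletypesfortm} with $n$ polynomial in $\card{\word}$. The only difference is bookkeeping: the paper absorbs your configuration-counting, padding, and parity arguments into a w.l.o.g.\ assumption that the normalized machine either halts or reaches $\tmfinalstate$ in strictly fewer than $2^{f(\card{\word})}$ steps and then runs forever, whereas you verify these details explicitly.
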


\begin{proof}
	A non-deterministic algorithm deciding $\tilingproblem{2^n, n}$ that runs in \linebreak[4] polynomial-space consists in guessing the tiling on row by row. We store the previous row, the current row and the $n$-bit index of the current row. For $\tilingproblem{*, n}$, we just do not care about the index of the current row.

	Let $A$ be a problem in $\PSPACE$.  There exists a machine $\turingmachine$ that decides $A$. W.l.o.g. we suppose that the machine is normalized (see Definition~\ref{definition:normalizedturingmachine}, and that there is a polynomial $f$ such that any execution on an input $w$ of size $|w|$ uses at most $f(|w|)$ cells and that, either stops in strictly less than $2^{f(|w|)}$ steps, or reaches $\stateaccept$ in strictly less than $2^{f(|w|)}$ steps and keeps running forever.
	
	The reduction is the same than in the proof of Theorem~\ref{theorem:tilingnnNP-complete}.
\end{proof}

In the same way, we obtain:
\begin{theorem}
	\label{theorem:kEXPSPACE}
	Let $k \geq 1$.
	$\tilingproblem{\expo {k+1} n, \expo {k} n}$ and 	$\tilingproblem{*, \expo {k} n}$ are  k\EXPSPACE-complete.
\end{theorem}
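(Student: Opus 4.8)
The plan is to scale up the proof of Theorem~\ref{theorem:PSPACE} verbatim, replacing the width bound $n$ by $\expo{k}{n}$ and the time bound $2^n$ by $\expo{k+1}{n}$; the tile gadgets and the encoding of executions are reused without change, and only the resource accounting is different.

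For membership, I would run the same nondeterministic row-by-row procedure as for $\tilingproblem{2^n, n}$. For $\tilingproblem{\expo{k+1}{n}, \expo{k}{n}}$ the algorithm guesses the tiling one row at a time, keeping in memory only the previous row, the current row, and the index of the current row. A single row has $\expo{k}{n}$ cells, each storing a tile of $\settilingtypes$ in $O(\log\card{\settilingtypes})$ space, so a row occupies $\expo{k}{n}$ space up to a factor polynomial in the input size; the row index ranges up to $\expo{k+1}{n}$ and hence needs $\log(\expo{k+1}{n}) = \expo{k}{n}$ bits. For $\tilingproblem{*, \expo{k}{n}}$ the procedure is identical except that we drop the index and simply accept as soon as a guessed row has a white bottom border — this is exactly reachability, from the seed row, of a row with white bottom, in the graph whose vertices are the legal rows. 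Both procedures run in nondeterministic space $\expo{k}{n}$. By Savitch's theorem $\NSPACE(\expo{k}{n}) \subseteq \DSPACE((\expo{k}{n})^2)$, and for $k \ge 1$ squaring a bound of the form $\expo{k}{p(n)}$ yields a bound $\expo{k}{q(n)}$ for another polynomial $q$, so both problems lie in $k$\EXPSPACE.

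For hardness, let $A \in k$\EXPSPACE be decided by a normalized machine $\turingmachine$ (Definition~\ref{definition:normalizedturingmachine}) that uses at most $\expo{k}{f(\card{w})}$ cells on input $w$, for some polynomial $f$. Since a computation using $S$ cells has at most $2^{O(S)}$ distinct configurations, we may assume $\turingmachine$ either halts in fewer than $\expo{k+1}{f(\card{w})}$ steps or reaches $\tmfinalstate$ in fewer than that many steps and loops forever, as arranged by normalization. The reduction is the one from Theorem~\ref{theorem:tilingnnNP-complete}: $tr(w) = \tuple{\settilingtypes_{\turingmachine, \word}, n, \tileseed}$ with $n := f(\card{w})$ written in unary, using the tiles of Figure~\ref{figure:tiletypesfortm}. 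Here $n$ is polynomial in $\card{w}$, so $tr$ is computable in logarithmic space, and the rectangle to be tiled has width $\expo{k}{n}$ and height $\expo{k+1}{n}$. Exactly as before, $w \in A$ iff the accepting execution of $\turingmachine$ can be laid out as a tiling of the $\expo{k+1}{n} \times \expo{k}{n}$ rectangle ending in the all-white bottom row; the same argument handles $\tilingproblem{*, \expo{k}{n}}$, because a space-$\expo{k}{n}$ computation that accepts does so within $\expo{k+1}{n}$ steps, while one that loops never produces the white bottom row.

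The main obstacle is purely the resource bookkeeping: verifying that the nondeterministic guessing really fits in space $\expo{k}{n}$ (in particular that the row index costs only $\expo{k}{n}$ bits and not more), and that $k$\EXPSPACE is robust enough — closed under the polynomial blow-up of Savitch's theorem and the $2^{O(S)}$ configuration count — for the width/height pair $(\expo{k+1}{n}, \expo{k}{n})$ to land exactly on $k$\EXPSPACE rather than on a neighbouring class. Once these are checked, correctness of the encoding is inherited from Theorems~\ref{theorem:tilingnnNP-complete} and~\ref{theorem:PSPACE}.
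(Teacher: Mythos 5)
Your proposal is correct and takes essentially the same approach as the paper: the paper gives no separate argument for this theorem, stating only that it follows ``in the same way'' as Theorem~\ref{theorem:PSPACE}, i.e., by scaling the row-by-row nondeterministic guessing and the reduction of Theorem~\ref{theorem:tilingnnNP-complete} up to width $\expo{k}{n}$ and height $\expo{k+1}{n}$. Your explicit handling of the Savitch step, the row-index size, and the configuration-count bookkeeping merely fills in details the paper leaves implicit.
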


\section{Two-player Games}
\label{section:games}

In order to capture alternating classes \cite{chandra1976alternation}, we introduce two players: $\exists$ and $\forall$. Each row is owned by some player. Each move consists in adding a row below the current one, by choosing tiles among a given finite set of tile types $\settilingtypes$, so the colors match. Figure~\ref{figure:tilingalternation} shows a finished tiling  game: player $\exists$ chose the first row, then player $\forall$ chose the second row and player $\exists$ chose the third row.

\begin{figure}[!h]
	\begin{center}
%
		\begin{tikzpicture}[scale=0.7]
		\node at (-1, 0.5) {$\exists$};
		\node at (-1, -0.5) {$\forall$};
		\node at (-1, -1.5) {$\exists$};
		\tile 0 0 \tilewhite \tilewhite \tilegreen\tilered
		\tile 1 0 \tilegreen\tilewhite\tilegreen\tileyellow
		\tile 2 0 \tilegreen\tilewhite\tilewhite\tileyellow
		\tile 0 {-1} \tilewhite \tilered\tilered\tilered
		\tile 1 {-1} \tilered \tileyellow\tilered\tilegreen
		\tile 2 {-1} \tilered \tileyellow\tilewhite\tileyellow
		\tile 0 {-2} \tilewhite\tilered\tilegreen\tilewhite
		\tile 1 {-2} \tilegreen\tilegreen\tilered\tilewhite
		\tile 2 {-2} \tilered	\tileyellow\tilewhite\tilewhite
		\end{tikzpicture}
	\end{center}
	\caption{Finished tiling game, we suppose players alternate at each row.\label{figure:tilingalternation}}
\end{figure}
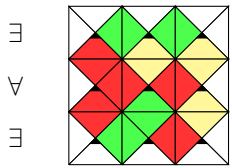

\subsection{Definition}
\newcommand{\playerabstractsequence}{\mathsf{plSeq}}
The ownership of rows is described by an abstract sequence $\playerabstractsequence$. For instance, if $\playerabstractsequence$ is $\exists^*$, it means that all rows belong to player $\exists$. If $\playerabstractsequence$ is $(\exists \forall)^*$, it means that the first, third... rows belong to player $\exists$ while the second, fourth... rows belong to player $\forall$. We will not develop a full theory of abstract sequences, since we will only use simple patterns. Player $\exists$ wins if the rectangle is fully tiled.

\begin{definition}
	Given $h: \ensN \rightarrow \ensN$ and $w: \ensN \rightarrow \ensN$, and an abstract sequence $\playerabstractsequence$, we define $\tilingproblem{h(n), w(n), \playerabstractsequence}$ to be the following decision problem:

	\begin{itemize}
		\item input: an integer $n$ given in unary, a finite set $\settilingtypes$ of tiling types, a tile $\tileseed$;
		\item Yes, if there is a winning strategy for player $\exists$ to the game described below, in the $h(n) \times w(n)$ rectangle, using $\tileseed$ as a seed, and respecting the abstract sequence $\playerabstractsequence$ of players; no otherwise.
	\end{itemize}
	
\end{definition}

Remark that $\tilingproblem{h(n), w(n)}$ is $\tilingproblem{h(n), w(n), \exists^*}$.

\subsection{Complexity Results}

Proofs are fastidious but, if players alternate, we capture alternating classes~\cite{chandra1976alternation}, and thus deterministic classes via \APTIME = \PSPACE and A$k$EXPTIME = $k$\EXPSPACE.
\begin{theorem}
	\label{theorem:tilingnnPSPACEgame}
	$\tilingproblem{n, n, (\exists\forall)^*}$ is \PSPACE-complete.
\end{theorem}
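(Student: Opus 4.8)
The plan is to prove the two inclusions separately, reusing the Turing-machine encoding of Figure~\ref{figure:tiletypesfortm} together with the identity \APTIME${} = {}$\PSPACE \cite{chandra1976alternation}. For membership, I would observe that the value of a game position depends only on the current (bottom-most) row and its index, since the matching constraints are local: knowing the previous row and the row number is enough to decide which rows may legally be placed next. A straightforward recursive min--max evaluation therefore only needs to keep, along the current branch, one row of $n$ tiles per level together with the $n$ levels of the recursion, i.e.\ polynomially many tiles in total, with existential rows maximising and universal rows minimising. Equivalently, the procedure ``for each row, the owning player chooses a consistent continuation, then recurse'' is an alternating computation running in polynomially many rounds of polynomial work, so $\tilingproblem{n,n,(\exists\forall)^*} \in \APTIME = \PSPACE$.

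For hardness, I would start from a language $A \in \PSPACE = \APTIME$ decided by an alternating Turing machine $\turingmachine$ running in polynomial time $f$. First I would normalise $\turingmachine$ as in Definition~\ref{definition:normalizedturingmachine} and, in addition, arrange that it \emph{strictly} alternates between existential and universal configurations at each step --- this is exactly what the partition into $\tmstatessecond$ and $\tmstatesfirst$ already provides, so I would declare the states of $\tmstatessecond$ (which contains the initial state $\tminitialstate$) to be universal and those of $\tmstatesfirst$ to be existential. The reduction then outputs the very same instance $\tuple{\settilingtypes_{\turingmachine, \word}, n, \tileseed}$ as in the proof of Theorem~\ref{theorem:tilingnnNP-complete}, with $n := f(|\word|)$, but now played as the game $\tilingproblem{n,n,(\exists\forall)^*}$.

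The correctness claim is that player $\exists$ has a winning strategy iff $\turingmachine$ accepts $\word$, and the key is to line up the per-row ownership with the quantifier alternation of $\turingmachine$. The seed forces row~$1$ to be the initial $\tmstatessecond$-configuration; since transitions out of $\tmstatessecond$ land in $\tmstatesfirst$ and vice versa, the configuration encoded in an odd row always lies in $\tmstatessecond$ and in an even row always in $\tmstatesfirst$. Because the move that produces row $i{+}1$ is owned by $\exists$ exactly when $i$ is even, the transition leaving an existential ($\tmstatesfirst$) configuration is always chosen by $\exists$ and the transition leaving a universal ($\tmstatessecond$) configuration by $\forall$. Under this alignment a winning strategy for $\exists$ is precisely an accepting computation tree of $\turingmachine$: completing the rectangle forces, via Constraint~2, a white bottom row, which by the encoding can only arise once $\turingmachine$ has reached $\tmfinalstate$ with an erased tape and is looping through $\tmfinalstate'$, i.e.\ has accepted on that branch. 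Conversely, if $\turingmachine$ rejects, $\forall$ can steer the play onto a non-accepting branch, which never produces a white bottom within $n$ rows and so leaves the rectangle unfinished.

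I expect the main obstacle to be this bookkeeping of the correspondence between strategies and alternating computation trees, together with the normalisation guaranteeing strict alternation without ever leaving a player stuck at a non-halting configuration: a universal configuration with no legal successor would vacuously accept for the machine but would \emph{lose} for $\exists$ in the game, so the normalisation must forbid it. Once the alignment of $\tmstatesfirst/\tmstatessecond$ with $\exists/\forall$ and of the time/space bound with the $n\times n$ square is fixed, the equivalence $\word \in A \iff \tuple{\settilingtypes_{\turingmachine, \word}, n, \tileseed} \in \tilingproblem{n,n,(\exists\forall)^*}$ follows by induction on the rows exactly as in Theorem~\ref{theorem:tilingnnNP-complete}, now with the min--max semantics replacing plain reachability.
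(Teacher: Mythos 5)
Your proposal is correct and follows essentially the route the paper intends: the paper gives no detailed proof of Theorem~\ref{theorem:tilingnnPSPACEgame} (it only remarks that ``proofs are fastidious'' and that alternating row ownership captures alternating classes via \APTIME${}={}$\PSPACE), and your argument --- membership by an alternating/min--max row-by-row evaluation in polynomial time, hardness by reusing $\settilingtypes_{\turingmachine,\word}$ with the $\tmstatesfirst$/$\tmstatessecond$ partition aligned with the $\exists$/$\forall$ ownership of rows --- is exactly the intended instantiation of that sketch. Your extra observation that a universal configuration with no successor would accept for the machine but lose for $\exists$ in the game is a genuine detail that the normalization must (and can, w.l.o.g.) rule out, and flagging it strengthens rather than deviates from the paper's approach.
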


\begin{theorem}
	\label{theorem:gameskexpspace}
	$\tilingproblem{\expo k n, \expo k n, (\exists\forall)^*}$ is $k$\EXPSPACE-complete.
\end{theorem}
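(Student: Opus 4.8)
The plan is to follow the route sketched right after Theorem~\ref{theorem:tilingnnPSPACEgame}: read the alternation of rows as the alternation of an alternating Turing machine, and then invoke the Chandra--Kozen--Stockmeyer equality A$k$EXPTIME $=$ $k$\EXPSPACE~\cite{chandra1976alternation}. Concretely, I would prove membership in $k$\EXPSPACE by exhibiting an alternating algorithm running in $k$-fold exponential time, and prove $k$\EXPSPACE-hardness by reducing the acceptance problem of an alternating machine to the tiling game. This is the game analogue of the non-game Theorems~\ref{theorem:knexptime} and~\ref{theorem:kEXPSPACE}, with the $(\exists\forall)^*$ ownership pattern playing the role of the quantifier alternation.

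For membership, I would describe an alternating procedure that builds the tiling one row at a time, keeping in memory only the current row, the previous row, and the $\expo{k-1}{n}$-bit binary index of the current row (recall $\log_2 \expo k n = \expo{k-1}{n}$). A row has width $\expo k n$ and is stored in space $\expo k n$. The procedure guesses the first row existentially, checking consistency with $\tileseed$, the white top edge, and the horizontal constraints; then, at each step, it guesses the next row — existentially when the current row is an $\exists$-row and universally when it is a $\forall$-row — checking the vertical and horizontal constraints and the white left/right edges; upon reaching row $\expo k n$ it accepts iff the bottom edge is white. The number of rounds equals the height $\expo k n$, the work per round is polynomial in the row length, so the running time is $(\expo k n)^2$, still $k$-fold exponential, and the alternation pattern is exactly $(\exists\forall)^*$. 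Hence the problem lies in A$k$EXPTIME $=$ $k$\EXPSPACE.

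For hardness, let $A$ be a problem in $k$\EXPSPACE; since $k$\EXPSPACE $=$ A$k$EXPTIME, there is an alternating Turing machine $\turingmachine$ deciding $A$ within time $\expo k {p(|w|)}$ for some polynomial $p$, using at most that many cells. I would first normalize $\turingmachine$ in the sense of Definition~\ref{definition:normalizedturingmachine}, additionally padding it with dummy single-successor states so that it strictly alternates between $\tmstatessecond$ and $\tmstatesfirst$ at every step (this at most doubles the running time), and I would arrange that the states of $\tmstatessecond$ — including the initial state $\tminitialstate$ — are the existential states and those of $\tmstatesfirst$ the universal ones. Reusing the tile set $\settilingtypes_{\turingmachine, w}$ of Figure~\ref{figure:tiletypesfortm}, but now reading a transition tile emanating from a $\tmstatessecond$-row as a choice of player $\exists$ and one emanating from a $\tmstatesfirst$-row as a choice of player $\forall$, the forced $\tmstatessecond/\tmstatesfirst$ alternation of the encoding coincides with the $(\exists\forall)^*$ ownership of rows. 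The reduction $tr(w) = \tuple{\settilingtypes_{\turingmachine, w}, n', \tileseed}$ with $n' := p(|w|)$ (written in unary, hence computable in log space) then yields a square of side $\expo k {n'}$ that accommodates both the time and the space of $\turingmachine$. Correctness mirrors Theorem~\ref{theorem:tilingnnNP-complete}: a winning strategy for player $\exists$ — a way to always extend the partial tiling to one with a white bottom row whatever $\forall$ does — is exactly an accepting computation tree of $\turingmachine$ on $w$ (some successor at existential rows, all successors at universal rows, terminating in $\tmfinalstate$ with an erased tape), and conversely.

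The fastidious part, as the authors warn, is the bookkeeping in the hardness direction: enforcing strict quantifier alternation by padding without disturbing the acceptance condition, aligning the seed row and the quantifier type of $\tminitialstate$ with the leading $\exists$ of the $(\exists\forall)^*$ pattern, and verifying that a winning $\exists$-strategy corresponds to an accepting \emph{subtree} (not merely a single accepting branch) of the alternating machine. By contrast, the complexity-theoretic wrapper — guessing rows by alternating branching for membership, and appealing to A$k$EXPTIME $=$ $k$\EXPSPACE in both directions — is routine once the row/quantifier correspondence is in place.
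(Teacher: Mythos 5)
Your overall architecture is exactly what the paper intends: the paper gives no written proof of this theorem, relying only on the remark preceding Theorem~\ref{theorem:tilingnnPSPACEgame} that row-alternation captures alternating classes together with the equality A$k$EXPTIME $=$ $k$\EXPSPACE, and your membership and hardness arguments flesh out precisely that plan, reusing the tile set of Figure~\ref{figure:tiletypesfortm} as in Theorems~\ref{theorem:knexptime} and~\ref{theorem:kEXPSPACE}.

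There is, however, a concrete error in your hardness bookkeeping. In the encoding, the transition applied to the configuration written in row $i$ is chosen by whoever places row $i{+}1$: the transition tiles are the ones whose top colors match the bottom of the cursor tile of row $i$. Row $1$ is forced by the seed to be the initial configuration, whose state $\tminitialstate$ lies in $\tmstatessecond$; under the $(\exists\forall)^*$ ownership, row $2$ belongs to $\forall$, so the first transition --- out of a $\tmstatessecond$-configuration --- is resolved by player $\forall$, the next by $\exists$, and so on. Hence $\tmstatessecond$ (including $\tminitialstate$) must be the \emph{universal} states of the alternating machine and $\tmstatesfirst$ the \emph{existential} ones, which is the opposite of your assignment. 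As written, your reduction hands the machine's existential choices to player $\forall$ and vice versa, so a winning strategy for $\exists$ would not correspond to an accepting computation tree; the fix is a pure relabeling (or, equivalently, padding the machine with one dummy universal first step), but it must be made. A second, smaller gap is in the membership direction: in the game, a position where the player to move has no legal row means the rectangle is never completed, so $\exists$ loses, whereas an alternating machine's universal branching over an empty set of successors accepts vacuously. Your alternating algorithm must therefore explicitly reject at a $\forall$-row that admits no legal extension (a deterministic feasibility check on the next row, polynomial in the row length, before the universal branching suffices), rather than inherit the vacuous acceptance.
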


In the same way, as $\APSPACE= \EXPTIME$ and $\kAEXPSPACE[k] = \kEXPTIME[k]$.
\begin{theorem}
	$\tilingproblem{2^n, n, (\exists\forall)^*}$ and $\tilingproblem{*, n, (\exists\forall)^*}$ are \EXPTIME-complete.
\end{theorem}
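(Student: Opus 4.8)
The plan is to exploit the identity $\APSPACE = \EXPTIME$ of \cite{chandra1976alternation}, exactly as the proof of Theorem~\ref{theorem:tilingnnPSPACEgame} exploited $\APTIME = \PSPACE$. So I would establish membership in $\APSPACE$ and hardness for $\APSPACE$. For membership, I would describe an alternating algorithm that plays the game row by row, storing only the previous row, the current row, and an $n$-bit index counting rows up to $2^n$; this is polynomial space, since the width is $n$. At each step the parity of the row index selects the player to move: on a row owned by $\exists$ the algorithm branches existentially over the choices of the next row, on a row owned by $\forall$ it branches universally, and in either case it checks the horizontal constraint inside the new row and the vertical constraint against the stored previous row. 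Player $\exists$'s objective (the rectangle fully tiled with a white bottom) is taken as the accepting condition. Hence $\tilingproblem{2^n, n, (\exists\forall)^*} \in \APSPACE = \EXPTIME$; for $\tilingproblem{*, n, (\exists\forall)^*}$ the same algorithm works without maintaining the index.

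For hardness I would reduce an arbitrary $A \in \EXPTIME = \APSPACE$ to the game. Let $\turingmachine$ be an alternating Turing machine deciding $A$ in space $f(|\word|)$ with $f$ polynomial; its configurations number $2^{O(f(|\word|))}$, so on any branch acceptance may be assumed within $2^{f(|\word|)}$ steps after absorbing the constant into $f$. I would reuse the tile set $\settilingtypes_{\turingmachine, \word}$ and seed $\tileseed$ of Figure~\ref{figure:tiletypesfortm}, with $n := f(|\word|)$, height $2^n$, and width $n$, exactly as in the reduction of Theorem~\ref{theorem:PSPACE}. The one new ingredient is to make the quantifier alternation of $\turingmachine$ coincide with the row alternation $(\exists\forall)^*$. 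By the normalization of Definition~\ref{definition:normalizedturingmachine}, the machine already alternates between the unprimed block $\tmstatesfirst$ and the primed block $\tmstatessecond$ at each step, so odd rows carry $\tmstatessecond$-configurations and even rows carry $\tmstatesfirst$-configurations. I would additionally require every $\tmstatesfirst$-state to be existential and every $\tmstatessecond$-state to be universal (padding the initial segment if necessary), so that the transition leaving a configuration is resolved by precisely the player who owns the row written below it. To prevent a player from getting stuck before the game ends, I would make the transition function total by adding non-accepting self-loops, so the only way to reach a white bottom is to genuinely reach $\tmfinalstate$.

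The correspondence is then that a winning strategy for $\exists$ in the game is exactly an accepting computation tree of $\turingmachine$ on $\word$: along existential rows $\exists$ selects a transition and along universal rows $\forall$ selects one, a play yields a fully tiled rectangle iff its branch reaches $\tmfinalstate$, and $\exists$ wins from the start iff every $\forall$-branch reaches $\tmfinalstate$ within $2^n$ steps, which is precisely $\word \in A$. For $\tilingproblem{*, n, (\exists\forall)^*}$ the same reduction works: since the space, hence the number of configurations, is bounded by $2^{O(n)}$, unbounded height grants no extra power and arbitrary-height acceptance coincides with acceptance within $2^{O(n)}$ steps.

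I expect the main obstacle to be this bookkeeping of alignment: pinning down the parity so that ownership of a row matches the quantifier of the configuration above it, and verifying that stuck positions and the white-bottom termination are handled so that $\exists$-wins and alternating acceptance agree exactly. As the paper warns, this part is fastidious rather than conceptually deep, the conceptual content being entirely carried by $\APSPACE = \EXPTIME$ together with the execution encoding already used for Theorem~\ref{theorem:PSPACE}.
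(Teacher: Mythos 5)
Your proposal is correct and follows exactly the route the paper intends: the paper proves this theorem only by the remark ``in the same way, as $\APSPACE = \EXPTIME$,'' i.e., membership via an alternating polynomial-space row-by-row simulation of the game and hardness by reusing the normalized Turing-machine tile encoding of Figure~\ref{figure:tiletypesfortm} with the machine's quantifier alternation aligned to the row alternation $(\exists\forall)^*$. Your elaboration of the fastidious details (parity alignment of $\tmstatesfirst$/$\tmstatessecond$ with row ownership, totalizing the transition function, and bounding the height in the $\tilingproblem{*, n, (\exists\forall)^*}$ case) is exactly the bookkeeping the paper leaves implicit.
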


\begin{theorem}
	\label{theorem:existencekexp}
	Let $k \geq 1$. 
	$\tilingproblem{\expo k n, \expo {k-1} n,  (\exists\forall)^*}$ and 	$\tilingproblem{*, \expo {k-1} n,  (\exists\forall)^*}$ are $k$\EXPTIME-complete.
\end{theorem}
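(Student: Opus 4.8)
The plan is to prove both directions through the Chandra--Kozen--Stockmeyer correspondence between alternating space and deterministic time, exactly as the preceding game theorems rely on $\APSPACE = \EXPTIME$ and $\APSPACE = \EXPTIME$-style identities. The identity I would use here is $\text{ASPACE}(\expo{k-1}{n}) = k\EXPTIME$, which follows from $\text{ASPACE}(s) = \text{DTIME}(2^{O(s)})$ by taking $s = \expo{k-1}{n}$ and observing $2^{O(\expo{k-1}{n})} = \expo{k}{n}$. The whole point of the width $\expo{k-1}{n}$ is that a single row is precisely one tape configuration of an alternating machine using $\expo{k-1}{n}$ cells, and the $(\exists\forall)^*$ ownership of rows is precisely the step-by-step alternation of that machine.

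For membership in $k\EXPTIME$, I would describe an alternating algorithm that plays the game row by row, storing only the current row (of size $\expo{k-1}{n}$) together with an $\expo{k-1}{n}$-bit counter for the current row index, since $\log \expo{k}{n} = \expo{k-1}{n}$. When the current row is owned by $\exists$ the algorithm branches existentially over the choice of the next row, when it is owned by $\forall$ it branches universally, and in either case it checks the horizontal and vertical color constraints against the stored row and against the seed $\tileseed$. It accepts exactly when a white bottom row is produced, i.e.\ when the $\expo{k}{n} \times \expo{k-1}{n}$ rectangle is completed. This runs in $\text{ASPACE}(\expo{k-1}{n}) = k\EXPTIME$. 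For the arbitrary-height variant $\tilingproblem{*, \expo{k-1}{n}, (\exists\forall)^*}$ I would simply drop the counter and cut off after $\expo{k}{n}$ rows, which is justified because there are at most $|\settilingtypes|^{\expo{k-1}{n}} = 2^{O(\expo{k-1}{n})} = \expo{k}{n}$ distinct rows, so any winning play for $\exists$ may be taken loop-free and hence of bounded length.

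For $k\EXPTIME$-hardness I would start from $k\EXPTIME = \text{ASPACE}(\expo{k-1}{n})$ and fix an arbitrary $A \in k\EXPTIME$ decided by a normalized alternating Turing machine $\turingmachine$ using $\expo{k-1}{n}$ tape cells, declaring the states of $\tmstatessecond$ existential and those of $\tmstatesfirst$ universal; since the normalization forces the alternation $\tmstatessecond \to \tmstatesfirst \to \tmstatessecond \to \cdots$ at every step, since $\tminitialstate \in \tmstatessecond$, and since the seed row is owned by $\exists$, existential configurations land on $\exists$-owned rows and universal ones on $\forall$-owned rows throughout the play. The reduction reuses the tile set $\settilingtypes_{\turingmachine,\word}$ of Figure~\ref{figure:tiletypesfortm} essentially unchanged, with width $\expo{k-1}{n}$ and height $\expo{k}{n}$ (the time bound of a space-$\expo{k-1}{n}$ alternating machine), and lets the player owning a row pick the transition that produces the next row. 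The correctness claim is that $\exists$ has a winning strategy iff $\turingmachine$ accepts $\word$: on $\exists$-owned rows $\exists$ selects a good successor configuration, on $\forall$-owned rows $\forall$ must leave $\exists$ winning for every legal successor, and a completed white-bottomed rectangle corresponds to reaching $\tmfinalstate$ with an erased tape.

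The main obstacle is making the game's winning condition agree with the acceptance semantics of alternating machines at the boundary cases, which is the ``fastidious'' part already flagged for Theorems~\ref{theorem:tilingnnPSPACEgame} and~\ref{theorem:gameskexpspace}. Two points need care. First, a universal configuration with no legal successor is \emph{vacuously accepting} for the alternating machine, whereas a $\forall$-owned row admitting no matching next row leaves the rectangle uncompleted and hence \emph{losing} for $\exists$; I would rule this out by exploiting the normalization so that the only way to terminate is through $\tmfinalstate$, with the tape erased and then closed off by a white bottom row, guaranteeing that every non-accepting configuration still has a successor and that rejection shows up only as $\exists$ never reaching a white bottom, never as $\forall$ getting stuck. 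Second, I must verify that neither player gains from an illegal move: an illegal attempt by $\exists$ is simply a non-available move, and $\forall$ is confined to color-matching rows, which by construction are exactly the legal successor configurations, so no adversarial sabotage outside the machine's transition relation is possible. Once these alignments are pinned down, the two strategy translations are routine and mirror the earlier game theorems.
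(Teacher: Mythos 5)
Your overall strategy is the one the paper intends: the paper prints no proof of this theorem (it appeals to ``proofs are fastidious'', to the fact that alternating-row tiling games capture alternating Turing machine classes, and to Chandra--Kozen--Stockmeyer-style identities), and your membership argument via $\ASPACE(\expo{k-1}{n})=\kEXPTIME[k]$ together with a row-by-row simulation using the tile set of Figure~\ref{figure:tiletypesfortm} is exactly that blueprint. However, your hardness reduction has a genuine off-by-one error in the alternation alignment --- which is precisely the ``fastidious'' part. In the game, the tiles of row $j+2$ are what encode the transition out of the configuration displayed on the bottom edge of row $j+1$: row $1$ (owned by $\exists$) is essentially forced by the seed and merely displays the initial configuration, so the first real choice is row $2$, owned by $\forall$, and that choice selects the successor of the \emph{initial} configuration. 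In general, the successor of configuration $j$ is chosen by the owner of row $j+2$, hence by $\forall$ when $j$ is even and by $\exists$ when $j$ is odd. Since the normalization alternates $\tmstatessecond\to\tmstatesfirst\to\tmstatessecond\to\cdots$ starting from $\tminitialstate\in\tmstatessecond$, configuration $j$ has its state in $\tmstatessecond$ exactly when $j$ is even. So the states in $\tmstatessecond$ must be declared \emph{universal} and those in $\tmstatesfirst$ \emph{existential} --- the opposite of your assignment. Under your assignment, the very first transition, out of what you call an existential configuration, is handed to $\forall$, and the game value equals acceptance of the machine with its quantifiers swapped rather than acceptance of $\turingmachine$. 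Your guiding invariant (``existential configurations land on $\exists$-owned rows'') is the wrong criterion: what must match is not the owner of the row a configuration sits on, but the owner of the row \emph{below} it. The repair is cheap: declare $\tmstatesfirst$ existential and $\tmstatessecond$ universal, and assume w.l.o.g.\ that the machine's first step is universal (prepend a dummy single-successor universal move if needed); the rest of your correctness discussion, including the treatment of stuck universal rows, then goes through.

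A smaller point: for $\tilingproblem{*, \expo{k-1}{n}, (\exists\forall)^*}$ you propose to ``drop the counter and cut off after $\expo{k}{n}$ rows'', which is self-contradictory, since enforcing a cutoff is exactly what the counter is for. Either keep the counter, or drop the cutoff altogether: an infinite play is rejecting under the least-fixed-point acceptance condition of alternating machines, and by positional determinacy of reachability games, if $\exists$ can win at all then some finite accepting computation tree exists, so the plain game-playing algorithm already witnesses membership in $\ASPACE(\expo{k-1}{n})$. Note also that the number of distinct rows is $|\settilingtypes|^{\expo{k-1}{n}}=2^{O(\expo{k-1}{n}\log|\settilingtypes|)}$ rather than $2^{O(\expo{k-1}{n})}$; this is harmless, but only because no exact cutoff is actually needed.
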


The polynomial hierarchy is captured as follows.
\begin{theorem}
	\label{theorem:PH}
	Let $k \geq 1$.
	
	\begin{itemize}
		\item 	$\tilingproblem{kn, n, \exists^n(\forall^n\exists^n)^{k-1}}$ is $\Sigma_k^P$-complete;
		\item 	$\tilingproblem{kn, n,  \forall^n(\exists^n\forall^n)^{k-1}}$ is $\Pi_k^P$-complete.
	\end{itemize}

\end{theorem}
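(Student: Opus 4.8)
The plan is to exploit the quantifier characterization of the polynomial hierarchy: a language $L$ is in $\Sigma_k^P$ iff there are a polynomial $p$ and a deterministic polynomial-time machine $\turingmachine$ with $x \in L \iff \exists y_1 \forall y_2 \cdots Q_k y_k\; \turingmachine(x, y_1, \dots, y_k) = 1$, where each $y_i \in \{0,1\}^{p(|x|)}$ and $Q_k$ is $\exists$ or $\forall$ according to the parity of $k$ (dually for $\Pi_k^P$, the quantifier string starting with $\forall$). The $k$ blocks of $n$ rows produced by the abstract sequence will correspond to the $k$ quantifier blocks, the owner of block $i$ playing the role of $Q_i$; note that the height $kn$ truncates the infinite pattern $\exists^n(\forall^n\exists^n)^{k-1}$ to exactly $k$ alternating blocks of $n$ rows, which is what fixes the level of the hierarchy.

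For membership I would simulate the game by an alternating polynomial-time machine whose $k$ phases, starting in the first player's mode, mirror the $k$ blocks. Each block is an existential or universal guess of $n$ rows, i.e.\ $O(n^2)$ tiles, which is polynomial; after the $k$ blocks the machine deterministically checks in polynomial time that the whole $kn \times n$ assignment is a legal tiling with $\tilingfunction(1,1) = \tileseed$ and white top/bottom borders. The only care needed is the treatment of a player with no legal continuation: since deciding whether a legal filling of a block exists (given the row above) is a polynomial left-to-right reachability scan through a layered graph of width $\card{\settilingtypes}$, the simulation can perform this test before each universal guess and reject exactly when the universal player would be stuck, so that the condition "rectangle completed'' is captured faithfully rather than vacuously satisfied. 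This places the problem in $\Sigma_k^P$ (resp.\ $\Pi_k^P$).

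For hardness I would build, from $\turingmachine$ and $x$, a tile set $\settilingtypes$, a seed $\tileseed$ and a polynomial value of $n$ so that the game value equals the truth of the quantified formula. In block $i$ the owner chooses the $p(|x|)$ bits of $y_i$ by \emph{choice tiles} on the top row of the block (each admitting two down-colors, $0$ or $1$, in the same context), and these bits are then frozen by \emph{copy tiles} propagating them straight down; in parallel the input $x$ and all earlier strings $y_1,\dots,y_{i-1}$ are carried downward on dedicated tracks, so that at the top of block $k$ the full tuple $(x, y_1, \dots, y_k)$ is present on one row. The last block then runs the deterministic computation of $\turingmachine(x, y_1, \dots, y_k)$ using the normalized-machine tiles of Figure~\ref{figure:tiletypesfortm} exactly as in the reduction of Theorem~\ref{theorem:tilingnnNP-complete}: the rectangle admits a white bottom border, i.e.\ is completable, iff that computation reaches $\tmfinalstate$, i.e.\ iff $\turingmachine$ accepts. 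Choosing $n$ as a fixed polynomial large enough to host the $p(|x|)$ choice cells across the width, the $k\,p(|x|)+|x|$ carrying tracks, and the polynomially many rows of the simulation makes the reduction log-space computable. Because the computation phase is deterministic, every non-choice row has a unique legal successor, so the ownership of those rows is irrelevant; hence $\exists$ has a winning strategy iff $\exists y_1 \forall y_2 \cdots Q_k y_k\; \turingmachine(x,\vec y)=1$, which is $x \in L$.

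The $\Pi_k^P$ statement follows by the mirror construction with the first player set to $\forall$ (abstract sequence $\forall^n(\exists^n\forall^n)^{k-1}$): the winning condition for $\exists$ is still "rectangle completed'', so $\exists$ wins iff $\forall y_1 \exists y_2 \cdots\; \turingmachine(x,\vec y)=1$, exactly the $\Pi_k^P$ form, and membership is obtained by the same alternating simulation now starting universal. I expect the main obstacle to be the bookkeeping of the hardness reduction — forcing each player to commit a free choice precisely within its own block while faithfully transporting every previously committed string down to the verification, and arranging the colors so that neither player can profit from an illegal or stuck move. The conceptually essential point, distinguishing this from the alternating-class theorems of Section~\ref{section:games}, is that the height is only $kn$ rather than exponential, which caps the number of alternations at exactly $k$ and thereby pins the construction to level $k$ of the hierarchy.
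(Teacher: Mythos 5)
The paper never actually proves Theorem~\ref{theorem:PH}: it is stated in the section prefaced by the remark that the game proofs are ``fastidious'' and are omitted. So there is no paper proof to compare against line by line; what you propose is the proof the paper evidently intends, since it reuses exactly the paper's own machinery --- the quantifier characterization of $\Sigma_k^P$/$\Pi_k^P$, the normalized-machine tiles of Figure~\ref{figure:tiletypesfortm} for the verification phase, and the observation underlying Section~\ref{section:deterministic} that rows encoding a deterministic computation have at most one legal successor, so their ownership is irrelevant. Your reading of the height $kn$ as truncating the abstract sequence to exactly $k$ alternating blocks is also the right one.

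One step of your membership argument needs repair. You propose to handle a stuck universal player by testing, ``before each universal guess,'' whether ``a legal filling of a block exists (given the row above).'' Taken literally this fails twice over. First, deciding whether a whole block of $n$ further rows can be legally placed is essentially $\tilingproblem{n,n}$ with a prescribed top row, hence \NP-hard, not a polynomial reachability scan. Second, it is the wrong predicate: player $\forall$ \emph{wants} to get stuck (a dead end leaves the rectangle incomplete, so $\exists$ loses), and $\forall$ may be able to steer into a dead end even when a legal completion of its block exists; a block-level existence test would then wrongly ignore that winning option of $\forall$. The test must be done row by row --- and your layered graph of width $\card{\settilingtypes}$ is precisely the single-row test: before each universally chosen row, deterministically check that at least one legal row exists and reject otherwise, while branches that guess illegal rows accept vacuously. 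These deterministic checks sit inside the universal phases, so the alternation count stays at $k$ and membership in $\Sigma_k^P$ (resp.\ $\Pi_k^P$) survives. With that correction, and granting the bookkeeping you acknowledge on the hardness side (in particular some counter/track colors chained from the seed, so that position-blind tiles know where choice rows and the verification block begin), your proof goes through.
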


Interestingly, we can capture the exotic class \AEXPTIMEpol (see \cite{DBLP:conf/jelia/BozzelliDP12} for instance), the class of problems decided by an alternating Turing machine in exponential time but with a polynomial number of alternations. Our reformulation is very closed from the problem called \emph{multi-tiling problem} introduced in \cite{DBLP:journals/corr/abs-1709-02094} that consists in tiling several $2^n \times 2^n$-squares. That tiling problem is used in~\cite{LICS2019DEMRI}.

\begin{theorem}
	\label{theorem:aexppol}
	$\tilingproblem{2 \times n \times 2^n, 2^n, (\exists^{2^n}\forall^{2^n})^*}$ is \AEXPTIMEpol-complete.
\end{theorem}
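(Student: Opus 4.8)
The plan is to prove the two directions separately, following the pattern of Theorem~\ref{theorem:tilingnnNP-complete} and its exponential analogues, but now carefully tracking the number of quantifier alternations.

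\emph{Membership.} First I would describe an alternating Turing machine that plays the game row by row. It stores the previous row and the $n$-bit index of the current row, it guesses the tiles of an $\exists$-row existentially and branches universally over the tiles of a $\forall$-row, and at each step it checks the horizontal constraint inside the row, the vertical constraint against the stored previous row, the seed constraint at position $(1,1)$, and the white-border constraints. A row has width $2^n$ and the rectangle has $2\times n\times 2^n$ rows, so the machine runs in exponential time. The crucial observation is the alternation count: because the ownership sequence $(\exists^{2^n}\forall^{2^n})^*$ keeps the same player for $2^n$ consecutive rows, the machine stays in the same (existential or universal) mode throughout an entire block of $2^n$ rows and switches mode only at a block boundary. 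There are $2n$ blocks, hence only $O(n)$ alternations, which is polynomial, so the problem lies in \AEXPTIMEpol.

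\emph{Hardness.} For \AEXPTIMEpol-hardness, let $A$ be decided by an alternating machine $\turingmachine$ in time $2^{p(|w|)}$ with at most $q(|w|)$ alternations. I would first normalize $\turingmachine$ in the spirit of Definition~\ref{definition:normalizedturingmachine}, but for the alternating setting: its states are grouped into existential and universal phases that strictly alternate, the run starts in an existential phase, each phase is padded to last exactly $2^n$ steps, there are at most $2n$ phases, and at most $2^n$ tape cells are used. Setting $n := \max(p(|w|), q(|w|))$ keeps $n$ polynomial while guaranteeing both $2^n \geq 2^{p(|w|)}$ (enough time and space per phase) and $2n \geq q(|w|)+1$ (enough phases). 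The tile set $\settilingtypes_{\turingmachine, w}$ is then a variant of the one in Figure~\ref{figure:tiletypesfortm}: one row encodes one configuration of width $2^n$, and one block of $2^n$ rows encodes one phase; on an $\exists$-block player $\exists$ instantiates the existential transitions through her choice of tiles, and on a $\forall$-block player $\forall$ instantiates the universal transitions. The seed $\tileseed$ forces the first row to be the initial configuration on input $w$, and, exactly as in Theorem~\ref{theorem:tilingnnNP-complete}, reaching $\tmfinalstate$ is what lets the bottom white border appear. A winning strategy for $\exists$ is precisely an accepting run tree of the alternating machine, so $w \in A$ iff $tr(w) \in \tilingproblem{2\times n\times 2^n,\, 2^n,\, (\exists^{2^n}\forall^{2^n})^*}$.

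\emph{Main obstacle.} The hard part will be making the machine's quantifier alternations line up exactly with the rigid block boundaries imposed by the external pattern $(\exists^{2^n}\forall^{2^n})^*$: the game forces precisely $2^n$ existential rows, then precisely $2^n$ universal rows, and so on, whereas a genuine phase of $\turingmachine$ may be shorter. I would resolve this by padding, encoding an $n$-bit row-counter-within-a-phase across the width of each row (reusing the counter technique already used to address $2^n$ cells in the proof of Theorem~\ref{theorem:knexptime}), so that a phase performs its real computation and then idles deterministically until the counter rolls over at $2^n$, at which point the mandatory mode switch occurs. A second subtlety is early termination: once $\turingmachine$ enters $\tmfinalstate$, the remaining rows, including those owned by $\forall$, must still be completable. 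I would make the idle tiles below an accepting configuration deterministic, so that whoever owns the remaining rows is forced to complete the tiling; this guarantees that acceptance yields a full tiling regardless of $\forall$'s moves, while a rejected or stuck branch leaves the rectangle unfinished and thus loses for $\exists$.
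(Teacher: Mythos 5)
Your proposal follows essentially the same route as the paper's own proof: a reduction from an alternating exponential-time machine with polynomially many alternations, normalized so that $\exists$ moves first and every (existential or universal) phase lasts exactly $2^{f(|w|)}$ steps, so that phases line up with the $2^n$-row blocks of the ownership pattern and the tile set $\settilingtypes_{\turingmachine,\word}$ of Figure~\ref{figure:tiletypesfortm} does the rest, with $n := f(|w|)$. The differences are only in level of detail: the paper states just the hardness reduction, delegating your ``main obstacle'' (the equal-length-phase normalization) to a citation of \cite{DBLP:conf/atal/CharrierS15} and omitting the membership direction, whereas you spell out both, including the key alternation-counting observation for membership.
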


\begin{proof}
	Let $A$ be a problem in \AEXPTIMEpol.
	 There is a alternating Turing machine deciding $A$ in exponential time, with at most a polynomial number of alternation. As mentioned in \cite{DBLP:conf/atal/CharrierS15}, we can suppose that player $\exists$ 
	plays first, that each portion of the execution played by the player $\exists$ and each portion of the execution played by the player $\forall$ are of the same length
	 $2^{f(|w|)}$ where $f$ is a polynomial and $w$ is the input word. We suppose that there are $2\times a(|w|)$ such portions. W.l.o.g, we suppose that $a(|w|) = f(|w|)$. We furthermore suppose that the machine is normalized.
	
	The reduction is $tr(w) = \tuple{\settilingtypes_{\turingmachine, \word}, n, \tileseed}$ 
	where $\settilingtypes_{\turingmachine, \word}, \tileseed$ are given in Figure~\ref{figure:tmencoding} and $n := f(|w|)$.
	
\end{proof}

\section{Deterministic Tilings}
\label{section:deterministic}

\newcommand{\deterministictilingproblem}[1]{det\tilingproblem{#1}}

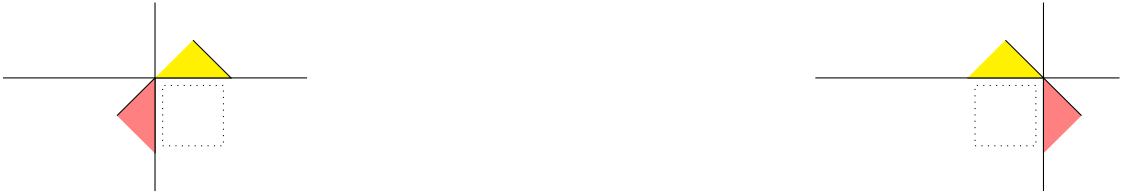
\begin{figure}[h]
	\begin{tikzpicture}
	\draw (-2, 1) -- (2, 1);
	\draw (0, -0.5) -- (0,2);
	\draw[fill=red!50] (0, 0) -- (0, 1) -- (-0.5, 0.5);
	\draw[fill=yellow] (0, 1) -- (1, 1) -- (0.5, 1.5);
	
	\draw[dotted] (0.1, 0.1) -- (0.9, 0.1) -- (0.9, 0.9) -- (0.1, 0.9) -- cycle;
	
	\end{tikzpicture}
	\hfill
		\begin{tikzpicture}
	\draw (-2, 1) -- (2, 1);
	\draw (1, -0.5) -- (1, 2);
	\draw[fill=red!50] (1, 0) -- (1, 1) -- (1.5, 0.5);
	\draw[fill=yellow] (0, 1) -- (1, 1) -- (0.5, 1.5);
	
	\draw[dotted] (0.1, 0.1) -- (0.9, 0.1) -- (0.9, 0.9) -- (0.1, 0.9) -- cycle;
	
	\end{tikzpicture}
	\caption{Deterministic set of tiles: there is a \emph{at most one} tile from $\settilingtypes$ that fits in the dotted square.\label{figure:deterministictiles}}
\end{figure}

In order to capture deterministic classes without games (no alternation between player $\exists$ and $\forall$), we introduce the notion of a deterministic set of tiles. 

\subsection{Deterministic Set of Tiles}

The idea is that a set $\settilingtypes$ of tiles is said to be \emph{deterministic} if there is at most one tile to complete a tiling, as shown in Figure~\ref{figure:deterministictiles} -- the direction depends on the top color. More precisely:

\newcommand{\colorsone}{Col}
\newcommand{\colorstwo}{Col'}
\begin{definition}
A set $\settilingtypes$ of tiles is deterministic if there is a partition  $\colors := \colorsone \sqcup \colorstwo$ such that $\whitetext \in \colorsone$ and:
\begin{itemize}
	\item for all tiles $t \in \settilingtypes$, $top(t) \in \colorsone$ iff $bottom(t) \in \colorstwo$;
	\item for all colors $c \in \colorsone$, for all colors $c' \in \colors$, there is at most one element $t$ such that $left(t) = c$ and $top(t) = c'$;
	\item for all colors $c \in \colorstwo$, for all color $c' \in \colors$, there is at most one element $t$ such that $right(t) = c$ and $top(t) = c'$.
\end{itemize}
\end{definition}

\begin{figure}[t]
	\begin{center}
		\begin{tikzpicture}[scale=0.7]
		\draw[fill=lightgray] (0, 3) rectangle (3, 0);
						\foreach \x in {0.5, 1.5} 
						\foreach \y in {0.5, 2.5}{
							\draw[->] (\x, \y) -- (\x+1, \y);
						}
						\foreach \x in {0.5, 1.5} 
						\foreach \y in {1.5}{
							\draw[->] (\x+1, \y) -- (\x, \y);
						}		
						\draw[->] (2.5, 2.5) edge[in=0, out=0] (2.5, 1.5);
						\draw[->] (0.5, 1.5) edge[in=180, out=180] (0.5, 0.5);
		\tile{0}{2}{\tilewhite}{\tilewhite}{\tilegreen}{\tilered}
		
		\end{tikzpicture}
	\end{center}
	\caption{Filling a rectangle in the Boustrophedon order.\label{figure:boustrophedon}}
\end{figure}

In other words, when the set of tiles is deterministic, it means that we can deterministically complete a tiling -- if it exists -- in the Boustrophedon order, as shown in Figure~\ref{figure:boustrophedon}.
Note that the fact that $\settilingtypes$ is deterministic can be tested in log-space in the size of $\settilingtypes$. 
We define $\deterministictilingproblem{h(n), w(n)}$
 the restriction of $\tilingproblem{h(n), w(n)}$ to inputs in which $\settilingtypes$ is deterministic.

\subsection{Complexity Results}

\begin{theorem}
	\label{theorem:detPTIME}
	$\deterministictilingproblem{n, n}$ is \PTIME-complete.
\end{theorem}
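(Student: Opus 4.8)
The plan is to establish the two directions separately, reusing the machinery of Section~\ref{section:existenceaboveP}.

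\emph{Membership in \PTIME.} Since $\settilingtypes$ is deterministic, the remark preceding the theorem applies: a tiling, if one exists, can be reconstructed greedily in the Boustrophedon order of Figure~\ref{figure:boustrophedon}. Concretely, I would start from $\tilingfunction(1,1) := \tileseed$ and sweep the cells of the $n \times n$ square row by row, scanning each row in the direction dictated by the partition class of its (already known) top colors. When a new cell is visited, its top neighbour and the horizontal neighbour just traversed are both already fixed, so by the two ``at most one'' clauses of the definition there is at most one tile of $\settilingtypes$ that can be placed; if there is none, or if a forced tile violates a white-boundary constraint, the algorithm rejects, and otherwise it proceeds. If all $n^2$ cells get filled consistently, it accepts. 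This runs in time $O(n^2 \cdot |\settilingtypes|)$, which is polynomial in the input because $n$ is written in unary; hence $\deterministictilingproblem{n,n} \in \PTIME$.

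\emph{\PTIME-hardness.} I would reduce an arbitrary $A \in \PTIME$ using exactly the reduction of Theorem~\ref{theorem:tilingnnNP-complete}, but starting from a \emph{deterministic} normalized machine $\turingmachine$ (Definition~\ref{definition:normalizedturingmachine}) deciding $A$ in time $f(|w|)$: set $tr(w) := \tuple{\settilingtypes_{\turingmachine,\word}, n, \tileseed}$ with $n := f(|w|)$, the tiles being those of Figure~\ref{figure:tiletypesfortm}. As already noted there, $tr$ is computable in log-space, which is the right notion of reduction for \PTIME-completeness. The correctness equivalence $w \in A$ iff $tr(w) \in \tilingproblem{n,n}$ is verbatim the argument of Theorem~\ref{theorem:tilingnnNP-complete}: the unique execution of $\turingmachine$ on $w$ reaches $\tmfinalstate$ within $f(|w|)$ steps exactly when $\turingmachine$ accepts, and this is exactly when the forced tiling of the square exists (the white bottom row witnessing acceptance).

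The only genuinely new point --- and the main obstacle --- is to check that $\settilingtypes_{\turingmachine,\word}$ is a \emph{deterministic} set of tiles, so that $tr(w)$ is indeed an instance of $\deterministictilingproblem{n,n}$. I would exhibit the partition $\colors = \colorsone \sqcup \colorstwo$ that places every unprimed color (in particular $\whitetext$) in $\colorsone$ and every primed color in $\colorstwo$, and then verify the three clauses schema by schema over Figure~\ref{figure:tiletypesfortm}. The first clause mirrors the enforced alternation between $\tmstatesfirst$-rows (unprimed) and $\tmstatessecond$-rows (primed): by construction every tile has its top and bottom colors of opposite primedness. The remaining two clauses express that the tile is fixed by the top color together with the single horizontal color on the side from which the head signal arrives --- to the right out of $\tmstatesfirst$ and to the left out of $\tmstatessecond$, by normalization. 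The essential use of the hypothesis is that $\turingmachine$ is \emph{deterministic}: for a given state/symbol pair the transition function $\transitiontm$ offers a single successor, which is precisely what rules out two competing head-tiles (for instance a move-right tile and a stay tile) sharing the same relevant colors. This tile-by-tile bookkeeping is fussy --- it is the very reason the encoding was equipped with the primed/unprimed tagging --- but routine once the partition above is fixed.
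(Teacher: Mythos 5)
Your proposal takes essentially the same approach as the paper's proof: membership via a deterministic greedy completion of the tiling in the Boustrophedon order, and \PTIME-hardness by reusing the reduction of Theorem~\ref{theorem:tilingnnNP-complete} with a deterministic normalized machine, observing that $\settilingtypes_{\turingmachine, \word}$ is then a deterministic set of tiles. You actually supply more detail than the paper, which merely asserts that $\settilingtypes_{\turingmachine, \word}$ is deterministic, whereas you exhibit the color partition (unprimed colors in $\colorsone$, primed in $\colorstwo$) and explain how determinism of the machine rules out competing tiles.
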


\begin{proof}
	We design a deterministic algorithm that decides $\deterministictilingproblem{n, n}$ in polynomial-time as follows: it tries to construct the tiling of the $n\times n$-rectangle without backtrack, in the Boustrophedon order, since  $\settilingtypes$ is deterministic.
	
	Let $A$ be a problem in $\PTIME$. Let $\turingmachine$ be a Turing machine that decides $A$	 in polynomial-time. The reduction is as in the proof of Theorem~\ref{theorem:tilingnnNP-complete}, since $\settilingtypes_{\turingmachine, \word}$ is deterministic.

\end{proof}

\begin{theorem}
	\label{theorem:detPSPACE}
	$\deterministictilingproblem{2^n, n}$ and 	$\deterministictilingproblem{*, n}$  is \PSPACE-complete.
\end{theorem}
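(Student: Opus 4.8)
The plan is to mirror the proof of Theorem~\ref{theorem:PSPACE}, but to exploit determinism so that no guessing (and hence no nondeterminism) is needed; membership will then land directly in deterministic \PSPACE, and hardness will reuse the encoding of Theorem~\ref{theorem:tilingnnNP-complete} restricted to \emph{deterministic} Turing machines.

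For the upper bound, first I would record the crucial structural fact that in a deterministic set $\settilingtypes$ of tiles the whole tiling extending $\tileseed$ is uniquely determined, and can be reconstructed row by row in the Boustrophedon order. Concretely, the top colours of the first row are all \whitetext{} (which lies in $\colorsone$), and an easy induction shows that within any single row all top colours lie in the same class, alternating between $\colorsone$ and $\colorstwo$ from one row to the next, since $top(t)\in\colorsone$ iff $bottom(t)\in\colorstwo$. Hence a row whose top colours are in $\colorsone$ is filled deterministically left to right (each tile fixed by its left and top colours, starting from the \whitetext{} left border), and a row whose top colours are in $\colorstwo$ is filled right to left from the \whitetext{} right border. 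Thus, given a complete row, the next row is either uniquely determined or does not exist, and one row occupies only polynomial space.

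For $\deterministictilingproblem{2^n, n}$ I would keep the current row together with an $n$-bit counter of the current row index, repeatedly compute the next row, and reject if at some point no compatible row exists; upon reaching row $2^{n}$ I accept iff every bottom colour of that last row is \whitetext. All of this is deterministic and uses space polynomial in the input, so the problem is in \PSPACE. For $\deterministictilingproblem{*, n}$ the only change is that I accept as soon as a produced row has an all-\whitetext{} bottom, plus I must detect non-termination: since there are at most $|\settilingtypes|^{n}$ distinct rows, the deterministic trajectory either reaches an all-\whitetext{} bottom, fails, or repeats a row, so simulating for $|\settilingtypes|^{n}+1$ steps and rejecting otherwise is correct, and the step counter needs only $O(n\log|\settilingtypes|)$ bits. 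For \PSPACE-hardness, given $A\in\PSPACE$ I would take a \emph{deterministic}, normalized machine $\turingmachine$ deciding $A$ in space $f(|\word|)$ and use exactly the reduction $tr(\word)=\tuple{\settilingtypes_{\turingmachine,\word}, n, \tileseed}$ with $n:=f(|\word|)$ chosen large enough that any accepting run halts within $2^{n}$ steps; because $\turingmachine$ is deterministic, $\settilingtypes_{\turingmachine,\word}$ is a deterministic tile set (as already observed in the proof of Theorem~\ref{theorem:detPTIME}), so $tr(\word)$ is a legitimate instance. Correctness is inherited verbatim from Theorem~\ref{theorem:PSPACE}: $\word\in A$ iff the unique run reaches $\tmfinalstate$ and erases the tape, which is precisely the situation producing a row of white at the bottom (at some finite height for $\deterministictilingproblem{*, n}$, and at height $2^{n}$ for $\deterministictilingproblem{2^n, n}$).

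The main obstacle I anticipate is confined to the $*$-variant membership: arguing the loop-detection bound cleanly, namely that the exponential step counter still fits in polynomial space and that revisiting a row genuinely certifies that no all-\whitetext{} bottom will ever appear, together with the bookkeeping that each row has a well-defined fill direction. The hardness direction and the $2^{n}$-variant membership are routine once the deterministic Boustrophedon reconstruction is in place.
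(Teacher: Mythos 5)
Your proposal is correct and follows essentially the approach the paper intends: the theorem is stated there without an explicit proof, and the expected argument is precisely your combination of the deterministic Boustrophedon reconstruction from Theorem~\ref{theorem:detPTIME} (for membership in deterministic \PSPACE) with the normalized-machine reduction of Theorems~\ref{theorem:tilingnnNP-complete} and~\ref{theorem:PSPACE} applied to a deterministic polynomial-space machine (for hardness, using that $\settilingtypes_{\turingmachine, \word}$ is deterministic exactly when $\turingmachine$ is). Your loop-detection counter bounded by $|\settilingtypes|^{n}+1$ for $\deterministictilingproblem{*, n}$ is a detail the paper leaves implicit, and you handle it correctly.
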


\begin{theorem}
	\label{theorem:detkexp}
	$\deterministictilingproblem{\expo k n, \expo k n}$ is $k$\EXPTIME-complete.
\end{theorem}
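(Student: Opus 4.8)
The plan is to prove Theorem~\ref{theorem:detkexp} by combining the deterministic tiling machinery of Theorem~\ref{theorem:detPTIME} with the exponential scaling argument already implicit in Theorems~\ref{theorem:knexptime} and~\ref{theorem:kEXPSPACE}. The statement asserts that $\deterministictilingproblem{\expo k n, \expo k n}$ is $k$\EXPTIME-complete, so I must establish both membership and hardness.

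For membership, I would give a deterministic algorithm that, because $\settilingtypes$ is deterministic, fills the $\expo k n \times \expo k n$-rectangle in the Boustrophedon order (Figure~\ref{figure:boustrophedon}) without any backtracking, exactly as in the proof of Theorem~\ref{theorem:detPTIME}. The only difference is the size of the rectangle: there are $\expo k n \times \expo k n = (\expo k n)^2$ cells, and at each cell the next tile is uniquely determined (if it exists) by the determinism conditions on $\colorsone$ and $\colorstwo$. Finding that unique tile costs time polynomial in $\card{\settilingtypes}$, so the total running time is $O((\expo k n)^2 \cdot \mathrm{poly}(\card{\settilingtypes}))$, which is bounded by $\expo k {\mathrm{poly}(n)}$ and hence lies in $k$\EXPTIME. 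No exponential guessing is needed precisely because determinism removes all nondeterminism from the completion process.

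For hardness, I would let $A$ be an arbitrary problem in $k$\EXPTIME and take a \emph{deterministic} Turing machine $\turingmachine$ deciding $A$ in time $\expo k {\mathrm{poly}(|w|)}$; w.l.o.g.\ $\turingmachine$ is normalized (Definition~\ref{definition:normalizedturingmachine}). I would reuse the reduction $tr(w) = \tuple{\settilingtypes_{\turingmachine, \word}, n, \tileseed}$ from the proof of Theorem~\ref{theorem:tilingnnNP-complete}, now choosing $n$ so that $\expo k n$ exceeds the polynomial time (and hence space) bound of $\turingmachine$ on $w$. The correctness equivalence $w \in A$ iff $tr(w) \in \deterministictilingproblem{\expo k n, \expo k n}$ is argued exactly as before: the seed forces the input word on the first row, the tile constraints force each successive row to encode one computation step, and the white bottom boundary forces the execution to reach $\tmfinalstate$. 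The crucial extra observation, already flagged in Section~\ref{section:deterministic}, is that the constructed set $\settilingtypes_{\turingmachine, \word}$ is in fact \emph{deterministic} whenever $\turingmachine$ is a deterministic machine, so $tr$ is a legitimate reduction into the deterministic variant. The reduction is computable in log-space because $n$ is written in unary.

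The main obstacle, and the one point deserving genuine care, is verifying that $\settilingtypes_{\turingmachine, \word}$ satisfies the three determinism conditions of the definition. I would exhibit the partition $\colors = \colorsone \sqcup \colorstwo$ dictated by the prime/non-prime tagging described in Section~\ref{section:existenceaboveP}: colors associated with states in $\tmstatesfirst$ (unprimed) go into $\colorsone$ together with $\whitetext$, and the primed copies corresponding to $\tmstatessecond$ go into $\colorstwo$. The normalization conditions were tailored precisely for this: the alternation between $\tmstatesfirst$ and $\tmstatessecond$ guarantees $top(t)\in\colorsone$ iff $bottom(t)\in\colorstwo$, and determinism of $\turingmachine$'s transition function ensures that fixing a left (resp.\ right) color in $\colorsone$ (resp.\ $\colorstwo$) together with the top color pins down at most one tile. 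Once this syntactic check is complete, the rest of the argument is identical to the non-deterministic case, so the real work is confined to this single lemma-style verification.
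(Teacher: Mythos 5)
Your proposal is correct and takes essentially the same approach the paper intends: the paper states Theorem~\ref{theorem:detkexp} without an explicit proof, leaving it to follow the pattern of Theorem~\ref{theorem:detPTIME} — membership by filling the rectangle deterministically in the Boustrophedon order, and hardness by reusing the reduction of Theorem~\ref{theorem:tilingnnNP-complete} with a normalized \emph{deterministic} machine, whose tile set $\settilingtypes_{\turingmachine, \word}$ is deterministic — scaled to $\expo k n \times \expo k n$ rectangles exactly as you do. Your explicit verification of the partition $\colorsone \sqcup \colorstwo$ via the prime/non-prime tagging is detail the paper itself leaves implicit, and it is consistent with the paper's definition and Figure~\ref{figure:deterministictiles}.
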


\begin{theorem}
	\label{theorem:detkexpspace}
	Let $k \geq 1$. 
	$\deterministictilingproblem{\expo {k+1} n, \expo {k} n}$ and 	$\deterministictilingproblem{*, \expo {k} n}$  is $\kEXPSPACE[k]$-complete.
\end{theorem}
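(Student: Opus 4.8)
The plan is to combine the deterministic row-by-row filling used for Theorem~\ref{theorem:detPTIME} with the space-efficient ``store one row plus an index'' technique of Theorem~\ref{theorem:PSPACE}, scaled to $k$-fold exponential parameters. For membership in $\kEXPSPACE[k]$, I would design a deterministic algorithm that reconstructs the unique candidate tiling in the Boustrophedon order of Figure~\ref{figure:boustrophedon}. Starting from the row forced by $\tileseed$, it repeatedly computes the next row from the current one: the top colour of each new cell is the bottom colour of the cell above (constraint~4), and the two determinism conditions then fix the tiles of the new row one at a time, scanning left-to-right or right-to-left according to whether that top colour lies in $\colorsone$ or $\colorstwo$ and seeding the scan with the white boundary colour of constraint~1. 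The decisive point is that only the current row --- $\expo k n$ tiles, i.e.\ $O(\expo k n \cdot \log\card{\settilingtypes})$ bits --- together with an $\expo k n$-bit row counter need be kept; since $\expo{k+1} n = 2^{\expo k n}$, this counter suffices to stop after exactly $\expo{k+1} n$ rows, and the algorithm accepts iff generation never gets stuck and the last row has white bottoms. Total space is $O(\expo k n)$, which is $k$-fold exponential, even though the running time is $(k{+}1)$-fold exponential; this is exactly why we land in a space class rather than a time class.

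For hardness I would take $A\in\kEXPSPACE[k]$ and fix a \emph{deterministic} normalized machine $\turingmachine$ deciding it in space $\expo k {f(\card w)}$ for a polynomial $f$. Being deterministic, every halting run visits at most $\expo{k+1}{O(f(\card w))}$ configurations, so after padding $f$ I may assume an accepting run reaches $\tmfinalstate$ in fewer than $\expo{k+1}{f(\card w)}$ steps and then oscillates on the erased tape. The reduction is the one of Theorem~\ref{theorem:tilingnnNP-complete}, $tr(w)=\tuple{\settilingtypes_{\turingmachine,\word}, n, \tileseed}$ with $n:=f(\card w)$, where now the width $\expo k n$ plays the role of the space bound and the height $\expo{k+1} n$ the role of the time bound. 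As already invoked in Theorem~\ref{theorem:detPTIME}, determinism of $\turingmachine$ makes $\settilingtypes_{\turingmachine,\word}$ a deterministic tile set, so $tr(w)$ is a legal instance, and the equivalence $w\in A \iff tr(w)\in\deterministictilingproblem{\expo{k+1} n, \expo k n}$ is verified exactly as in Theorem~\ref{theorem:tilingnnNP-complete}; $tr$ is log-space computable.

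For the arbitrary-height variant $\deterministictilingproblem{*, \expo k n}$ both the algorithm and the reduction carry over. The algorithm now has no target height and simply accepts as soon as it outputs a full row with white bottoms; to reject a tiling that never completes, I would use that each row is a deterministic function of the one above, so the sequence of rows is eventually periodic with at most $\card{\settilingtypes}^{\expo k n}$ distinct values. Hence if no white-bottom row appears within that many steps the generation has entered a cycle and never terminates, and rejecting is sound; the threshold counter needs only $\expo k n\cdot\log\card{\settilingtypes}$ bits, still $k$-fold exponential. On the hardness side the identical reduction works, since the machine accepts iff the corresponding finite tiling exists.

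The step I expect to be most delicate is not the reduction itself but the two correctness details that the determinism forces on us: first, making the padding oscillation $\tmfinalstate\leftrightarrow\tmfinalstate'$ deposit a white-bottom row at \emph{exactly} height $\expo{k+1} n$ --- which requires the primed/unprimed parity of the final configuration to line up, and is precisely what clauses 5--6 of the normalization are there to guarantee --- and second, confirming that $\settilingtypes_{\turingmachine,\word}$ really meets the definition of a deterministic tile set, i.e.\ exhibiting the partition $\colors=\colorsone\sqcup\colorstwo$ separating the primed and unprimed phases and checking the two ``at most one tile'' conditions against each family in Figure~\ref{figure:tiletypesfortm}. Everything else is the routine space accounting inherited from Theorems~\ref{theorem:PSPACE} and~\ref{theorem:kEXPSPACE}; note in particular that, in contrast with the square (time) theorems, determinism does \emph{not} lower the complexity here, since $\kEXPSPACE[k]$ is closed under nondeterminism by Savitch's theorem and is used only to obtain a deterministic algorithm and to match a deterministic hardness source.
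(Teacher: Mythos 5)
Your proposal is correct and follows precisely the route the paper intends for this theorem (which it states without an explicit proof): membership via the deterministic Boustrophedon completion of Theorem~\ref{theorem:detPTIME} combined with the row-plus-counter space accounting of Theorems~\ref{theorem:PSPACE} and~\ref{theorem:kEXPSPACE}, and hardness via the reduction of Theorem~\ref{theorem:tilingnnNP-complete} applied to a deterministic normalized machine, whose tile set is deterministic. Your added details --- the eventual-periodicity counter for the arbitrary-height variant and the parity/partition checks you flag as delicate --- are exactly the points the paper leaves implicit, and they are handled soundly.
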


\section{Existence of Tilings for Classes below \PTIME}
\label{section:existencebelowP}

\subsection{\FO}

\FO is the class of decision problems such that the set of positive instances is described by a logical formula of first-order logic~(see the book on descriptive complexity by Immerman, \cite{immerman2012descriptive}).
\begin{theorem}
	\label{theorem:fo}
	Let $k, \ell$ be two constants.
	$\tilingproblem{k, \ell}$ is in FO.
\end{theorem}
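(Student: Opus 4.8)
The plan is to exhibit, for each fixed pair of constants $k, \ell$, a single first-order sentence that defines the positive instances of $\tilingproblem{k, \ell}$. The crucial observation is that when $h = k$ and $w = \ell$ are constants, the rectangle to be tiled has exactly $k \cdot \ell$ cells, a number that does not depend on the input (in particular not on $n$, which plays no role here). Hence a tiling is nothing but an assignment of one tile type from $\settilingtypes$ to each of these constantly many cells, and such an assignment can be guessed by a constant number of first-order existential quantifiers.

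First I would fix the encoding of an input $\tuple{n, \settilingtypes, \tileseed}$ as a finite relational structure: the domain consists of the tile types together with the colors occurring in them; the vocabulary has a unary predicate distinguishing tiles from colors, four unary functions $left, up, right, down$ sending a tile to the corresponding color, a constant symbol for $\whitetext$, and a constant symbol for the seed $\tileseed$. The unary integer $n$ is part of the input but is simply ignored.

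Then I would write the sentence
$$\exists t_{1,1} \cdots \exists t_{k,\ell} \Big( (t_{1,1} = \tileseed) \wedge \Phi_{\mathrm{bound}} \wedge \Phi_{\mathrm{hor}} \wedge \Phi_{\mathrm{vert}} \Big),$$
where the quantifiers range over tiles and where $\Phi_{\mathrm{bound}}$ is the finite conjunction encoding Constraints~1--2 (for every $i$, $left(t_{i,1}) = \whitetext$ and $right(t_{i,\ell}) = \whitetext$; for every $j$, $up(t_{1,j}) = \whitetext$ and $down(t_{k,j}) = \whitetext$), $\Phi_{\mathrm{hor}}$ is the conjunction over all $i$ and all $j < \ell$ of $right(t_{i,j}) = left(t_{i,j+1})$ encoding Constraint~3, and $\Phi_{\mathrm{vert}}$ is the conjunction over all $i < k$ and all $j$ of $down(t_{i,j}) = up(t_{i+1,j})$ encoding Constraint~4. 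Because $k$ and $\ell$ are constants, all three subformulas are finite conjunctions and the whole sentence has constant size; in particular it is genuinely first-order. By construction the structure satisfies it exactly when there is a $\settilingtypes$-tiling of the $k \times \ell$ rectangle with $\tilingfunction(1,1) = \tileseed$, which is precisely what membership in \FO requires.

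There is essentially no hard step: no linear order, counting, or arithmetic on the domain is needed, which is exactly why the problem stays inside plain \FO rather than requiring an ordered structure or built-in arithmetic. The only points demanding care are bookkeeping: relativizing the existential quantifiers to range over tiles (not colors) via the tile predicate, and reconciling the two naming conventions ($up/down$ versus the $top/bottom$ used in the definition of a tiling, which denote the same components).
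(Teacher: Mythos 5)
Your proposal is correct and follows essentially the same route as the paper: since $k$ and $\ell$ are constants, a tiling is guessed by constantly many existential quantifiers (one tile variable per cell) and the adjacency, border, and seed constraints become a quantifier-free conjunction of constant size. The paper merely sketches this via the $2\times 2$ example with binary predicates $H$ and $V$ for the matching constraints, whereas you spell out the structure encoding, the white-border conditions, and the seed condition explicitly; this is a presentational difference, not a different argument.
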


\begin{proof}
	For instance, $\tilingproblem{2, 2}$ corresponds to the first-order formula $$\exists t_1, t_2, t_3, t_4, H(t_1, t_2) \land H(t_3, t_4) \land V(t_1, t_3) \land V(t_2, t_4)$$
	
	where predicates $H$ and $V$ encode respectively the horizontal and vertical constraints.
\end{proof}

\subsection{\NLOGSPACE}

In this section, the width of rectangles is 1, so left- and right- colors are irrelevant.

\begin{theorem}
	$\tilingproblem{n, 1}$ and 	$\tilingproblem{*, 1}$ are \NLOGSPACE-complete.
	\label{theorem:tiling1n}
\end{theorem}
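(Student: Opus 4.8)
The plan is to prove the two directions separately: membership in \NLOGSPACE, and \NLOGSPACE-hardness by reduction from directed graph reachability, which is \NLOGSPACE-complete~\cite{Papadimitriou}. The conceptual key is that, when the width is $1$, a tiling of a column of height $h$ is nothing but a vertical stack $\atile_1, \dots, \atile_h$ constrained only by $\atile_1 = \tileseed$, $top(\atile_1) = \whitetext$, $bottom(\atile_h) = \whitetext$, and $bottom(\atile_i) = top(\atile_{i+1})$ for $1 \le i < h$; the horizontal constraints degenerate to forcing $left = right = \whitetext$ on each used tile. Hence a column tiling is exactly a \emph{walk} in the directed \emph{tile graph} whose vertices are colors and where each tile type contributes one edge $top(\atile) \to bottom(\atile)$, starting at $\tileseed$ (whose top must be $\whitetext$) and ending on a tile whose bottom is $\whitetext$.

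For membership, a nondeterministic machine guesses the stack from top to bottom, checking on the fly that the first tile is $\tileseed$ and that adjacent colors match. It need store only the color currently exposed at the bottom, using $O(\log \card{\settilingtypes})$ bits, together with a row counter. For $\tilingproblem{n, 1}$, since $n$ is given in unary the counter ranges up to $n$ and fits in $O(\log n)$ bits, logarithmic in the input size; the machine accepts when the counter reaches $n$ on a tile with white bottom. For $\tilingproblem{*, 1}$ no exact length is imposed, so it suffices to reach \emph{some} tile with white bottom; if one is reachable at all it is reachable within $\card{\settilingtypes}$ steps, so capping the counter at $\card{\settilingtypes}$ guarantees termination. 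Both procedures run in logarithmic space.

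For hardness I reduce from reachability: given a directed graph $G = (V, E)$ and vertices $s, t$, decide whether $t$ is reachable from $s$. I take colors $V \cup \set{\whitetext}$, let the seed be the tile with top $\whitetext$ and bottom $s$, add for each edge $(u, v) \in E$ a tile with top $u$ and bottom $v$, and add a closing tile with top $t$ and bottom $\whitetext$. A column tiling then reads off $\whitetext, s, \dots, t, \whitetext$, i.e.\ a walk from $s$ to $t$ in $G$, so $\tilingproblem{*, 1}$ accepts iff $t$ is reachable from $s$; the construction is plainly log-space computable. For the fixed-height variant $\tilingproblem{n, 1}$ I additionally include a self-loop tile with top $t$ and bottom $t$ and set $n := \card{V} + 2$, written in unary (of polynomial size). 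If $t$ is reachable it is reachable by a simple path of length at most $\card{V} - 1$; after arriving at $t$ one pads with the $t \to t$ self-loop so that the $n - 2$ interior tiles realize a walk of length exactly $\card{V}$ from $s$ to $t$, framed by the seed and the closing tile. Conversely any such column exhibits a walk from $s$ to $t$, so $\tilingproblem{n, 1}$ accepts iff $t$ is reachable from $s$.

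The only delicate points are matching the height \emph{exactly} in the fixed-height case, which the padding self-loop at $t$ resolves, and bounding the search length in the arbitrary-height case so the counter stays logarithmic. Both are routine once the tile-graph/walk correspondence is set up, and that correspondence — recognizing that width $1$ collapses tiling to directed reachability — is the heart of the argument and what makes the reference to \cite{Papadimitriou} the right completeness source.
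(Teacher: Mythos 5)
Your proposal is correct and takes essentially the same route as the paper: membership via a nondeterministic log-space algorithm that guesses the column tile by tile (storing only the exposed color and a counter), and hardness via a log-space reduction from directed $s$-$t$ reachability with one tile per edge, a seed tile with bottom $s$, a closing tile with top $t$, and a padding self-loop to hit the exact height (you place the self-loop at $t$ where the paper places it at $s$, an immaterial difference). Your explicit cap of the counter at $\card{\settilingtypes}$ for $\tilingproblem{*, 1}$ just makes precise a detail the paper leaves implicit.
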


\begin{proof}
	The following non-deterministic algorithm decides $\tilingproblem{n, 1}$ in log-space.
\begin{center}
	\begin{minipage}{10cm}
		\begin{algo}
		\begin{algoblocprocedure}{algo($\settilingtypes, n, \tileseed$)}
			
			Let $\atile := \tileseed$
			
			\begin{algoblocfor}{$k$ := 1 to $n$}

				\algochoose $\atile' \in \settilingtypes$ such that  $\verticalconstraint{\atile}{\atile'}$

			\end{algoblocfor}
			
			\algoaccept
		\end{algoblocprocedure}
	\end{algo}
	
	\end{minipage}
\end{center}

	We reduce in log-space the reachability problem ($s-t$-connectivity problem) to $\tilingproblem{n, 1}$ as follows. Let $\tuple{G, s, t}$ be an instance of the $s-t$-connectivity problem. We construct in log-space the following instance of $\tilingproblem{n, 1}$:
	\begin{itemize}
		\item $n$ is 2 + the number of nodes in $G$;
		\item $\settilingtypes$ contains exactly the tiles \tikz{\tiletext {0} {0} {} {} {} {$s$}}, \tikz{\tiletext {0} {0} {} {$s$} {} {$s$}}, \tikz{\tiletext {0} {0} {} {$t$} {} {}},  \tikz{\tiletext {0} {0} {} {$u$} {} {$v$}} whenever there is an edge $(u, v)$ in $G$;
		
		\item the seed is  \tikz{\tiletext {0} {0} {} {} {} {$s$}}.

	\end{itemize}

There is a path from $s$ to $t$ in $G$ iff we can tile the $n\times 1$-rectangle. 
\end{proof}

%
%
%
%
%
%

In the same way (it refines Th. 2.7 in \cite{DBLP:journals/tcs/Etzion-PetruschkaHM94}):

\begin{theorem}
	For all constants $k$ (not part of the input), the variant of \linebreak[4] $\tilingproblem{n, k}$ without seed is \NLOGSPACE-complete.
\end{theorem}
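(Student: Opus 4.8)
The plan is to mirror the proof of Theorem~\ref{theorem:tiling1n}, adapting its row-by-row algorithm to width $k$ and replacing the seed by the top-whiteness constraint (Constraint~2). For membership in \NLOGSPACE, I would run the same non-deterministic procedure that guesses the tiling one row at a time. Since $k$ is a fixed constant, a single row is a $k$-tuple of tile types and can be stored in $O(k\log|\settilingtypes|)=O(\log|\settilingtypes|)$ space; together with the $\lceil\log n\rceil$-bit index of the current row, the whole computation stays within logarithmic space. The only change compared with the seeded width-$1$ case is that, having no seed, the algorithm guesses the first row \emph{freely}, merely checking that every tile in it has white top (Constraint~2), that consecutive tiles match horizontally (Constraint~3), and that the two extreme tiles are white on the outside (Constraint~1); at row $n$ it additionally checks that every bottom color is white (Constraint~2). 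Accepting iff all checks pass yields an \NLOGSPACE{} algorithm.

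For \NLOGSPACE-hardness I would again reduce from the $s$-$t$-connectivity problem, which is \NLOGSPACE-complete~\cite{Papadimitriou}. Given $\tuple{G, s, t}$ I build the same tile types as in the proof of Theorem~\ref{theorem:tiling1n} --- a \emph{start} tile with white top and bottom $s$, a \emph{stay} tile with top and bottom $s$, a \emph{finish} tile with top $t$ and white bottom, and one tile with top $u$, bottom $v$ per edge $(u,v)$ --- but I force every tile to be white on its left and right colors, I set the height to $n := 2 + |V(G)|$, and I use the fixed width $k$. Whitening the sides makes Constraints~1 and~3 vacuous, so the $k$ columns become mutually independent and each must be a valid width-$1$ tiling on its own; hence the $n\times k$ rectangle is tileable iff a single column is. A column is tileable iff there is a walk of length $|V(G)|$ from $s$ to $t$ that stays at $s$ for a while and then follows a simple path, which exists iff $t$ is reachable from $s$. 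The whole construction is plainly computable in log-space, and specializing to $k=1$ recovers the (seedless) width-$1$ encoding.

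The only point requiring care is the absence of the seed, which is what distinguishes this statement from Theorem~\ref{theorem:tiling1n}. I would resolve it by observing that the start tile is the \emph{unique} tile whose top color is white, so Constraint~2 forces it into the first row of every column exactly as the seed did before; symmetrically, the finish tile is the unique tile with white bottom, so Constraint~2 forces it into row $n$ and thereby forces the encoded walk to terminate at $t$. Thus Constraint~2 plays the role of the seed on both ends of every column, and the independence of the columns (obtained by whitening the horizontal borders) lets the width-$k$ problem inherit hardness directly from the width-$1$ reachability encoding. I expect no genuine obstacle beyond checking these two uniqueness conditions and the elementary equivalence between length-$|V(G)|$ walks of the prescribed shape and reachability.
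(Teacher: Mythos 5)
Your proposal is correct and follows exactly the route the paper intends: the paper gives no explicit proof here, saying only ``in the same way'' as Theorem~\ref{theorem:tiling1n}, and your argument is precisely that proof adapted --- the same row-by-row nondeterministic guessing for membership (log-space since $k$ is constant), and the same $s$-$t$-connectivity reduction for hardness, with white side colors decoupling the $k$ columns and the uniqueness of the white-top (start) and white-bottom (finish) tiles letting Constraint~2 take over the role of the seed.
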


\subsection{Rotating tiles and \LOGSPACE}

\newcommand{\rotatetilingproblem}[1]{rot\tilingproblem{#1}}

In order to capture \LOGSPACE, we introduce tile types that can be rotated by 180 degrees. We define $\rotatetilingproblem{h(n), w(n)}$ the restriction of $\tilingproblem{h(n), w(n)}$ to inputs in which $\settilingtypes$ is such that:

\begin{itemize}
	\item if $\tuple{left(t), up(t), right(t), down(t)} \in \settilingtypes$ then
	  $\tuple{right(t), bottom(t), left(t), up(t)} \in \settilingtypes$.
\end{itemize}

%
%
%
%

\begin{theorem}
	\label{theorem:LOGSPACE}
	$\rotatetilingproblem{n, 1}$ and 	$\rotatetilingproblem{*, 1}$ are \LOGSPACE-complete, w.r.t. \FO-reductions\footnote{Note that \LOGSPACE is a too small class for log-space reductions to be meaningful.}.
\end{theorem}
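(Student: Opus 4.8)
The plan is to prove \LOGSPACE-completeness by adapting the \NLOGSPACE argument of Theorem~\ref{theorem:tiling1n} to the undirected setting, exploiting the rotation closure condition. The key observation is that the rotation rule forces the tile set to be \emph{symmetric}: whenever a tile records an oriented edge from color $u$ (bottom) to color $v$ (top), the rotated tile records the reverse edge. Consequently, the graph encoded by the tiles is effectively undirected, and reachability in undirected graphs is in \LOGSPACE by Reingold's theorem~\cite{DBLP:journals/jacm/Reingold08}.

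For membership in \LOGSPACE, I would argue that deciding $\rotatetilingproblem{n,1}$ (and the $*$-variant) reduces in log-space to undirected $s$--$t$-connectivity. Given the instance $\tuple{\settilingtypes, n, \tileseed}$, I build (in log-space, on the fly) the undirected graph whose vertices are the colors appearing as top/bottom colors of tiles and whose edges are the pairs $\{u,v\}$ such that some tile in $\settilingtypes$ has bottom color $u$ and top color $v$; the rotation condition guarantees this relation is symmetric, so the graph is genuinely undirected. A vertical stack of tiles of height $n$ starting from $\tileseed$ exists exactly when there is a walk of the appropriate length in this graph from the seed's top color to a color that can legally terminate the column (a top-white tile), and since walks can repeat vertices freely, the existence of such a column reduces to plain undirected reachability. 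Composing this log-space reduction with Reingold's \LOGSPACE algorithm places the problem in \LOGSPACE; the $*$-variant is even simpler since the length constraint is dropped.

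For hardness I would reduce undirected $s$--$t$-connectivity to $\rotatetilingproblem{n,1}$, mirroring the directed construction in the proof of Theorem~\ref{theorem:tiling1n} but now producing a symmetric tile set. Given an undirected graph $G$ with designated $s,t$, I set $n := 2 + |V(G)|$ and include the seed tile and the terminal tiles exactly as before, together with, for each edge $\{u,v\}$ of $G$, \emph{both} tiles \tikz{\tiletext {0} {0} {} {$u$} {} {$v$}} and \tikz{\tiletext {0} {0} {} {$v$} {} {$u$}}; this pair is precisely what the rotation closure condition demands (up to the irrelevant left/right colors, which are unconstrained when $w=1$), so the resulting $\settilingtypes$ is a legal input. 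Then a tiling of the $n \times 1$ rectangle corresponds to an undirected walk from $s$ to $t$, which exists iff $s$ and $t$ are connected in $G$.

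The main subtlety, and the reason the footnote insists on \FO-reductions rather than log-space reductions, is the choice of reduction notion: since \LOGSPACE is being characterized, a log-space reduction would be too coarse to be meaningful (it could solve the problem outright), so both the reduction producing the tile set and the reduction to undirected reachability must be carried out by first-order (hence $\mathrm{AC}^0$) transductions. I would therefore verify that the edge-to-tile-pair construction and the color-graph extraction are each definable by fixed first-order formulas over the input structure, which is routine since each step is a local, bounded-arity rewriting; the genuine content of the theorem lies entirely in invoking Reingold's result for the upper bound and in observing that rotation closure is exactly symmetry of the underlying edge relation.
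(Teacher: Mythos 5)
Your membership argument has a genuine gap. You collapse the instance to the ``color graph'' (vertices are colors, one edge per tile joining its top and bottom colors) and then invoke plain undirected reachability, on the grounds that ``walks can repeat vertices freely''. But backtracking lengthens a walk by $2$, never by $1$, so the existence of a walk of length \emph{exactly} $n$ is a parity-sensitive question that plain reachability cannot answer. Concretely, let $\settilingtypes$ consist of the tile with white top and bottom color $a$ together with its $180^\circ$ rotation (top $a$, bottom white), all side colors white; this set is rotation-closed. Take the first tile as $\tileseed$ and $n=3$: any column must alternate white, $a$, white, $a$, so the bottom of row $3$ is $a$ and no $3\times 1$ tiling exists, yet your color graph is connected and your algorithm answers yes. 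This is precisely why the paper's proof does \emph{not} project onto colors: it builds a graph on a source, $n$ \emph{disjoint copies of $\settilingtypes$}, and a target (source joined to the white-top tiles of copy $1$, vertical-compatibility edges between consecutive copies, white-bottom tiles of copy $n$ joined to the target) and runs Reingold's algorithm on that layered graph; the layering hard-wires the height, so the parity problem disappears. Your argument is sound only for $\rotatetilingproblem{*, 1}$, where the height really is unconstrained.

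The hardness direction fails for a matching reason. Adding both orientations of each edge tile does not make the instance legal: rotation-closure also forces into $\settilingtypes$ the rotations of the seed tile (white top, bottom $s$), of the self-loop tile (top $s$, bottom $s$) and of the terminal tile (top $t$, white bottom), all of which you import ``exactly as before''. Once the reversed seed tile (top $s$, white bottom) is present, the correspondence ``tiling $\Leftrightarrow$ walk from $s$ to $t$'' breaks: the column made of the seed, then $n-2$ copies of the self-loop tile, then the reversed seed is a valid $n\times 1$ tiling that never visits $t$, so \emph{every} instance maps to a yes-instance. A correct reduction needs real surgery rather than mere symmetrization, e.g.: make $s$ the unique neighbor of white, drop the terminal and self-loop-at-$s$ tiles, subdivide every edge of $G$ (so that the graph becomes bipartite and all closed walks avoiding $t$ have even length), put a self-loop only at $t$, and choose $n$ odd and sufficiently large (say $n = 4|V(G)|+1$); then an $n\times 1$ tiling is an odd closed walk at white, which must use the loop at $t$ and hence certifies connectivity, and conversely. (To be fair, the paper is terse here --- it merely asserts that the directed reduction ``is also'' a reduction from undirected reachability, and it runs into the same difficulty --- but your explicit construction is where the flawed step is committed.) Note finally that the same backtracking phenomenon makes $\rotatetilingproblem{*, 1}$ trivial: the seed followed by its rotation is always a valid $2\times 1$ tiling whenever the seed's top and sides are white, so this variant lies in \FO and cannot be \LOGSPACE-hard under \FO-reductions at all; the parity/backtracking issue is intrinsic to rotation-closure, not an artifact of your particular construction.
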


\begin{proof}
	We reduce 	$\rotatetilingproblem{n, 1}$ in log-space to the reachability problem in undirected graphs, which is in \LOGSPACE \cite{DBLP:journals/jacm/Reingold08}:
	\begin{itemize}
		\item the nodes of the undirected graph are a source, $n$ copies of $\settilingtypes$, a target;
		\item We add edges from the source to all tiles in the first copy of $\settilingtypes$ if its top is white; we add edges between $t$ of the $i^{\text{th}}$ copy of  $\settilingtypes$ and $t'$ of the $i+1^{\text{th}}$ copy of  $\settilingtypes$ whenever $\verticalconstraint{\atile}{\atile'}$; we add edges between any tile in the $n^{\text{th}}$ copy of $\settilingtypes$ whose bottom is white  and the target.
	\end{itemize}

The reduction given in the proof of~Theorem~\ref{theorem:tiling1n} is also a reduction from the reachability problem in undirected graphs to $\rotatetilingproblem{n, 1}$. This reduction is a \FO-reduction (you can define the set of tiles via first-order formulas).
\end{proof}

%
%
%

%
%
%
%
%

\section{Conclusion}
\label{section:conclusion}

Table~\ref{table:complexity} sums up the main complexity results for tiling. There are many research avenues, to name a few:
\begin{itemize}
	\item how to define tiling problems with imperfect information in the spirit of~\cite{DBLP:conf/focs/PetersonR79}?
	\item how to define parameterized tiling problems in the spirit of parameterized complexity?
	\item how to get rid off the seed in some of tiling problems and/or border constraints?
	\item what are the connections between tilings and other classes such as AC (alternating circuits), NC (Nick's class), the Boolean hierarchy?
	
	\item is $\tilingproblem{cst, cst}$ \FO-complete in some sense?
	\item could we have a more natural definition of \emph{deterministic} tilings?
\end{itemize}

\paragraph{Acknowledgments.} I would like to thank Sophie Pinchinat for pointing out the \linebreak[4]class \AEXPTIMEpol. Thanks to Stephane Demri for the discussions about the  \linebreak[4]class \AEXPTIMEpol. Thanks to Sasha Rubin and Tristan Charrier for having given me the motivation to write this note. Especially thanks to Sasha Rubin for his comments on a previous version of that document. Thanks to Florian Beau for the discussion about Tetravex.

\begin{table}[t]
	
	\begin{center}
		\begin{tabular}{lllll}
		in \FO & $\tilingproblem{cst, cst}$  & Th.~\ref{theorem:fo} \\
		\hline
		\LOGSPACE-complete & $\rotatetilingproblem{n, 1}$ & Th.~\ref{theorem:LOGSPACE} \\
		& $\rotatetilingproblem{*, 1}$ & \\
		\hline
		\NLOGSPACE-complete & $\tilingproblem{n, 1}$ & Th.~\ref{theorem:tiling1n} \\
		 & $\tilingproblem{*, 1}$ & \\
		\hline 
		\PTIME-complete & 
		$\deterministictilingproblem{n, n}$ & 	Th.~\ref{theorem:detPTIME} \\
		\hline
		\NP-complete & $\tilingproblem{n, n}$ & Th.~\ref{theorem:tilingnnNP-complete} \\
		\hline
		$\Sigma_k^P$-complete & $\tilingproblem{kn, n, \exists^n(\forall^n\exists^n)^{k-1}}$  & Th.~\ref{theorem:PH} \\
$\Pi_k^P$-complete	& 	$\tilingproblem{kn, n,  \forall^n(\exists^n\forall^n)^{k-1}}$  & Th.~\ref{theorem:PH} \\
\hline
		\PSPACE-complete & 
		$\tilingproblem{ 2^n, n}$ & Th.~\ref{theorem:PSPACE} \\
		& $\tilingproblem{*, n}$ & \\
		&  $\deterministictilingproblem{2^n, n}$ &
		Th.~\ref{theorem:detPSPACE} \\
		&  	$\deterministictilingproblem{*, n}$ \\
		& $\tilingproblem{n, n, (\exists\forall)^*}$ & Th.~\ref{theorem:tilingnnPSPACEgame} \\
		\hline
		\AEXPTIMEpol-complete & $\tilingproblem{2 \times n \times 2^n, 2^n, (\exists^{2^n}\forall^{2^n})^*}$ & Th.~\ref{theorem:aexppol} \\
		\hline
		$k$\EXPTIME-complete & $\tilingproblem{\expo k n, \expo {k-1} n,  (\exists\forall)^*}$ & Th.~\ref{theorem:existencekexp} \\
		& $\tilingproblem{*, \expo {k-1} n,  (\exists\forall)^*}$ & \\
		& $\deterministictilingproblem{\expo k n, \expo k n}$ & Th.~\ref{theorem:detkexp} \\
		\hline
		$k$\NEXPTIME-complete & $\tilingproblem{\expo k n, \expo k n}$ & Th.~\ref{theorem:knexptime} \\
		\hline
		$k$\EXPSPACE-complete & $\tilingproblem{\expo {k+1} n, \expo {k} n}$ & Th.~\ref{theorem:kEXPSPACE} \\
		 &  	$\tilingproblem{*, \expo {k} n}$ & \\
		 		&  $\deterministictilingproblem{\expo {k+1} n, \expo {k} n}$ & Th.~\ref{theorem:detkexpspace}  \\
		 & $\deterministictilingproblem{*, \expo {k} n}$ & \\
		 & $\tilingproblem{\expo k n, \expo k n, (\exists\forall)^*}$ & Th.~\ref{theorem:gameskexpspace} \\
		\end{tabular}
	\end{center}
	\caption{Complexities of tiling ($n$ is in \textbf{unary}).\label{table:complexity}}
\end{table}

\newpage
\bibliographystyle{plain}
\bibliography{biblio}

\end{document}